\def\BState{\State\hskip-\ALG@thistlm}
\title{One-step Estimation of Networked Population Size with Anonymity Using Respondent-Driven Capture-Recapture and Hashing}
\author[1,*]{Bilal Khan}
\author[1]{Hsuan-Wei Lee}
\author[1]{Kirk Dombrowski}
\affil[1]{Department of Sociology, University of Nebraska-Lincoln}
\affil[*]{Corresponding author: bkhan2@unl.edu}
\newtheorem{lemma}{Lemma}
\newtheorem{definition}{Definition}
\newcounter{example}[section]
\begin{document}
\date{}
\maketitle
\begin{abstract}

Estimates of population size for hidden and hard-to-reach individuals are of particular interest to health officials when health problems are concentrated in such populations. Efforts to derive these estimates are often frustrated by a range of factors including social stigma or an association with illegal activities that ordinarily preclude conventional survey strategies. This paper builds on and extends prior work that proposed a method to meet these challenges. Here we describe a rigorous formalization of a one-step, network-based population estimation procedure that can be employed under conditions of anonymity. The estimation procedure is designed to be implemented alongside currently accepted strategies for research with hidden populations. Simulation experiments are described that test the efficacy of the method across a range of implementation conditions and hidden population sizes. The results of these experiments show that reliable population estimates can be derived for hidden, networked population as large as 12{,}500 and perhaps larger for one family of random graphs. As such, the method shows potential for cost-effective implementation health and disease surveillance officials concerned with hidden populations. Limitations and future work are discussed in the concluding section.

\textbf{Keywords}: Capture-recapture, network size estimation, hidden populations, respondent driven sampling, key populations 

\end{abstract}

\section{Introduction}

Population size estimation for hidden and hard-to-reach populations is of considerable interest to health officials seeking to prevent health problems that may be concentrated in such populations \cite{magnani_review_2005}, or when “reservoirs” of infection among a hidden population pose health threats to the ambient population in which the hidden population is embedded \cite{dombrowski_topological_2013, reluga_reservoir_2007}. In the former, treatable maladies can remain out of reach, multiplying eventual treatment costs when cases are discovered only in their most severe form. Such is the situation, for example, with mental illness among homeless and street dwelling populations \cite{bonin_typology_2009, burt_critical_1995}. In other situations, the “hidden” nature of a reservoir population may frustrate intervention efforts that are effective in the ambient population, preventing control of infections despite well-known contagion dynamics \cite{potterat_aids_1993}. A simple example, long-known to public health officials, is the high prevalence of sexually transmitted infections among commercial sex workers \cite{abdul2014estimating, law_spatial_2004, zohrabyan_determinants_2013}. Numerous other examples are recognized. In such situations, health officials seek to know the overall prevalence levels of maladies within a hidden population \textit{and} the size of those populations in order to understand the scope of treatment needs and overall social risk. 

Efforts to ascertain prevalence and size estimates are frustrated by a range of factors that produce the “hiddenness” of the population initially. Such factors include heavy social stigma that precludes a willingness on the part of members of the hidden population to reveal their membership. Such is the situation with people who inject drugs (PWID), who may be unwilling to self-identify as such under ordinary survey conditions \cite{darke_self-report_1998, harwood_sampling_2012}. Hiddenness due to stigma can be further compounded when such activities are illegal, when they carry heavy personal costs (such as when self-identified heterosexual men also have sex with other men), and when disease status is unknown (such as undiagnosed HIV infection rates among PWID). In these situations, conventional sampling is unreliable, and ordinary multiplier methods based on conventional sampling are rendered ineffective.

A number of strategies have been devised to address either the prevalence or population size (or both) aspects of this problem. These include capture-recapture \cite{larson_indirect_1994, vuylsteke_capturerecapture_2010}, chain referral \cite{biernacki_snowball_1981, platt_methods_2006}, venue-based \cite{haley_venue-based_2014, muhib_venue-based_2001} and cluster sampling \cite{burnham_mortality_2006}, and combinations of these. Among the most popular is respondent-driven sampling (RDS) \cite{heckathorn_extensions_2007, RDS2002, salganik_sampling_2004}, which has been adopted for use in many of the situations described above, and which is employed widely in HIV surveillance efforts both within the United States and beyond \cite{abdul2006implementation}. RDS employs an incentivized chain referral process to recruit a sample of the hidden population. Under restricted but recognized conditions, RDS can be shown to result in a steady-state, “equilibrium” sample. Numerous means have been derived for producing reasonable prevalence estimates from such a sample \cite{gile_respondent-driven_2010, gile_diagnostics_2015, mouw_network_2012,shi_model-based_2016, verdery_network_2015, wejnert_empirical_2009} while accounting for biases introduce in the referral process. The ease of implementing RDS, the fact that it can operate under conditions of anonymity (via number coupons that track referrals), and its rigorous treatment under a range of statistical modeling strategies have made it a popular choice for researchers working with hidden populations \cite{heckathorn_network_2017}. However, equally rigorous means for estimating the overall size of the hidden population from RDS derived data have been less successful—often resulting in widely varying estimates \cite{sulaberidze_population_2016}. Still, the ability of the RDS method to produce meaningful prevalence data remains, and presents considerable potential for use in size estimation.

Other efforts restricted to size estimation alone have been developed, including various versions of capture-recapture procedures (sometimes call mark-recapture procedures) \cite{domingo-salvany_analytical_1998, kruse_participatory_2003} and network scale-up methods (NSUM) \cite{bernard_counting_2010}. Capture-recapture efforts normally make use of a sample of the hidden population and some external, normally institutional knowledge-base (e.g. arrest records or hospital admissions) for estimation purposes \cite{hay_estimating_1996, vuylsteke_capturerecapture_2010}. In these cases, however, two assumptions must be met: (i) that the sample is representative of the hidden population more generally, and (ii) that everyone in the hidden population is equally likely to be “captured” in the official statistics used in the estimation \cite{jones_recapture_2014}. While representativeness can sometimes be assumed (as in the case of RDS), it is often difficult to establish the uniformity of the capture statistics, and often there are good reasons to believe that random capture is simply not the case. Frankly put, police arrests and hospital admission can seldom be assumed to draw randomly from the hidden population. Further, capture-recapture methods often require that the sample be identifiable in the institutional record, requiring that expectations on the part of sample respondents for anonymity be sacrificed. When working with hidden and highly stigmatized populations, such a sacrifice can be highly detrimental to both recruitment and informant reliability \cite{wolitski_effects_2009}.

Network scale-up methods are also used to establish the size of hidden populations, though work in this area remains at an early stage. Here members of the entire population (ambient plus hidden) are asked to report on the number of known associates who fit the hidden population criteria \cite{ezoe_population_2012, guo_estimating_2013}. This technique has the advantage of being employable under ordinary random sampling conditions that can make use of known sampling frames (i.e. mail surveys and/or random digit dialing) \cite{habecker_improving_2015}. However, this method assumes that ordinary people know whom among their associates fit the criteria for inclusion in the hidden category \cite{killworth_investigating_2006, salganik_assessing_2011}. This assumption raises suspicion in many of those situations in which we ordinarily wish to use it, as when we seek to estimate populations of PWID or sex workers. Under these conditions, individuals from the hidden population may not want their friends and associates to know about their membership in the categories, and may make efforts to hide this information. These efforts introduce “transmission” errors into NSUM estimates that are difficult to uncover or estimate. 

In previous work, we proposed a novel capture-recapture methodology for estimating the size of a hidden population from an RDS sample \cite{TELEFUNKEN2012}. Were such a result possible, it could easily be integrated into the conventional RDS framework, taking advantage of the wide body of work in that area and the ability of RDS to produce reasonable prevalence estimates. Our method was first proposed in several forms undertaken as quasi-experiments within actual data collection efforts with commercially sexually exploited children \cite{curtis_commercial_2008} and, later, users of methamphetamine \cite{wendel_dynamics_2011}. Both studies took place in New York City, and both made use of RDS samples. Subsequent implementation of the technique have lent further evidence of the effectiveness (and ease of implementation) of what we there referred to as the “telefunken” method. This method asks RDS sample respondents to report on others in the population known to them via an encoding of their associates telephone number and demographic features, avoiding the reliance on official statics or the need to draw two independent samples from the hidden population.\footnote{The technique was referred to as \textit{telefunken} because it entailed an encoding of the phone numbers of known associates in the hidden population. The code was created by taking a specified number of phone number digits, in order from last to first, and encoding each digit as 0/1 for even/odd, and again 0/1 for 0-4/5-9. This produced a binary code of length 2 x the number of phone digits specified in the protocol. This many-to-one encoding allowed for ongoing anonymity for both respondents and their reported associates, while enabling the matching of contacts across numerous respondent interviews. It also introduced the need to estimate the number of expected false matches created by the many-to-one encoding.} In essence, this “one-step” approach eases the assumptions normally associated with other capture-recapture methods, and can be accomplished via a single sample from the hidden population. If shown to be effective, this fact lends simplicity and greater cost-effectiveness to the size estimation procedure, potentially allowing for widespread application.

Given the interest in the technique \cite{merli_sampling_2016, mouw_network_2012, sulaberidze_population_2016}, this paper proposes a more rigorous formalization of a one-step, network-based population estimation procedure that can be employed under conditions of anonymity. In what follows we describe the technique and simulate its performance under a range of implementation conditions across a range of hidden population sizes. The simulations show considerable promise for the technique under the kinds of research scenarios normally associated with research among “hidden populations”. Limitations and further efforts toward validation/extension are discussed at the end of the paper.

As above, the framework for the paper assumes a population $V$ of size $|V| = n$. Under standard capture-recapture protocols, we identify $S\subset V$ to be a first uniform random ``capture'' sample, and $R\subset V$ be a second {\em independent} uniform random ``recapture'' sample. From independence assumptions, we know 
\begin{eqnarray}
\label{classical-prop}
\frac{|V|}{|S|} \approx \frac{|R|}{|S \cap R|}.
\end{eqnarray}
It follows that
\begin{eqnarray}
\label{lincoln}
|V| \approx \frac{|S| \cdot |R|}{|S \cap R|}.
\end{eqnarray}
This quantity is also known as the Lincoln-Peterson estimator \cite{LINCOLN1930,PETERSON1896}. In cases where $|S \cap R|=0$, the Chapman estimator \cite{CHAPMAN1951} is often applied
\begin{eqnarray}
\label{chapman}
|V| \approx \frac{(|S|+1) \cdot (|R|+1)}{|S \cap R|+1} - 1.
\end{eqnarray}

\section{Capture/Recapture on Graphs}

Where the Lincoln-Peterson technique assumed an unstructured population $V$, here we consider settings in
which $V$ has addition binary relational structure.  For the remainder of this paper, we will assume the population to 
be the ground set of a static undirected graph $G=(V,E)$.  Where appropriate, we will be explicit about any 
assumptions on the edge relation $E\subset V\times V$.

\subsection{The First Assay: Capture on Graphs}
\label{first-assay}

We replace the notion of a random ``capture'' sample, with a {\bf random respondent-driven ``capture'' sample}  \cite{RDS1997,RDS2002}.
A respondent-driven capture sample is a random variable $\text{RDS-CAPTURE}(G, s, c, n_0)$ requiring four parameters: an underlying networked population $G=(V,E)$, 
a specified number of seeds $s$, the number of coupons $c$ to be given to each subject, and $n_0$ the target capture size.  
Informally stated, the procedure chooses $s$ random initial ``seed'' subjects in the network.  Each of these subjects is asked to 
participate in a ``referral'' process by being given $c$ coupons to
be distribute among their peers.  When those peers come in for interview, they too in turn, are given $c$ coupons and participate in the
referral process.  The scheme proceeds recursively in this manner until $n_0$ individuals have been recruited and interviewed.
If and whenever the referral process stalls before $n_0$ subjects have been interviewed, a new seed is recruited.
Participation incentives are arranged to ensure that no subject will be the recipient of more than one coupon.
Note that this breadth-first search process always yields a collection of disjoint trees \cite{Bollobas98a}.

The formal description of this real-world process in a simulated setting is given in the RDS-CAPTURE procedure 
presented as Algorithm \ref{rds-algo} (pp. \pageref{rds-algo}).

\begin{algorithm}
\caption{random respondent-driven ``capture'' sample}\label{rds-algo}
\begin{algorithmic}[1]
\Procedure{RDS-CAPTURE}{$G$, $s$, $c$, $n_0$}
\State $t \gets 0$
\State $S_0 \gets \{ v_1, \ldots, v_s\}$ a set of $s$ distinct ``seeds'' uniformly at random from $V[G]$.
\State $T_0 \gets \emptyset$.
\State $F_0 \gets S_0$.
\Repeat 
\State $t \gets t+1$
\State $x_{t} \gets $ a uniformly randomly chosen element from $F_{t-1}$
\State $N(x_{t}) \gets \{ v\in V[G] \setminus S_{t-1} \;|\; (x_{t},v) \in E[G] \}$ its undiscovered neighbors
\If {$|N(x_{t})| \leqslant c$}
\State $R(x_{t}) \gets N(x_{t})$
\Else
\State $R(x_{t}) \gets$ a uniformly  random chosen size-$c$ subset of $N(x_{t})$
\EndIf
\State $S_{t} \gets S_{t-1} \cup \{ x_{t} \} \cup R(x_{t})$
\State $T_{t} \gets T_{t-1} \cup \{ (x_t, v) \;|\; v \in R(x_{t})\}$
\State $F_{t} \gets F_{t-1} \setminus \{ x_{t} \} \cup R(x_{t})$
\If {$F_{t} = \emptyset$ and $|S_{t}| < n_0$}
\State $F_{t} \gets \{ v \}$ a single ``seed'' chosen at random from $V[G] \setminus S_{t}$
\EndIf
\Until {$|S_{t}| \geqslant n_0$}
\State \Return $(S_t, T_t)$
\EndProcedure
\end{algorithmic}
\end{algorithm}

\subsection{The Second Assay: Recapture on Graphs}
\label{second-assay}

In classical capture-recapture methods, the independence of the two assays is essential to the estimation procedure.
Here we will abandon this requirement, and instead derive the recapture set from the capture set, via a mechanical (albeit randomized) procedure.
We show later, empirically, that this definition of recapture set can be used as the foundation of reasonable population size estimates for large families of
networked populations.  What is being leveraged in this estimation strategy is not the independence of the first and second assays (they are in fact
not independent)--rather, the technique rests on the intrinsic geometry of the network structures themselves.

A recapture sample is a random variable $\text{RECAPTURE}(G, (S,T), p)$ requiring three parameters: the underlying networked population $G=(V,E)$, 
a subgraph representing the first assay RDS-CAPTURE sample $(S,T)$, and the maximum number of reports per subject $p$.  
Informally stated, the procedure interviews each of the individuals in the first assay $S$ and asks them to reveal the identities of upto $p$ ``reports''.
These reports may be any individuals in the subjects' ego network {\em except} their recruiter or recruitees within the first assay; 
note that an individual may report members of the first assay who are not their immediate recruiter or recruitees.
Stated alternately, each individual $v$ is asked to report on upto $p$ of their neighbors in $G=(V,E)$ who were {\em not} $v$'s neighbor in the RDS tree $(S,T)$.
These reports, obtained from each of the individuals in $S$, are then amalgamated into a single {\em multiset}, taking care to 
retain information about multiplicities (since the same individual may be reported by multiple subjects in the first capture).

The formal description of the RECAPTURE procedure is given in 
the pseudocode of Algorithm \ref{recap-algo} (pp. \pageref{recap-algo}).

\begin{algorithm}
\caption{random respondent-driven ``recapture'' sample}\label{recap-algo}
\begin{algorithmic}[1]
\Procedure{RECAPTURE}{$G$, $(S, T)$, $p$}
\State $R \gets \emptyset$
\ForAll {$v \in S$} 
\State $N_T(v) \gets \{ v\in S  \;|\; (x_{t},v) \in T \}$ are $v$'s neighbors in $(S,T)$
\State $N_G(v) \gets \{ v\in V[G]  \;|\; (x_{t},v) \in E[G] \}$ are $v$'s neighbors in $G$
\State $C_v \gets N_G(v) \setminus N_T(v)$ are $v$'s candidate reports
\If {$|C_v| \leqslant p$}
\State $R_v \gets C_v$ 
\Else
\State $R_v \gets$ a uniformly  random chosen size-$p$ subset of $ C_v$
\EndIf
\State $R \gets R \uplus R_v$ 
\EndFor
\State \Return $R$
\EndProcedure
\end{algorithmic}
\end{algorithm}

In what follows we will be presenting a series of population size estimators based on the set-valued procedure RDS-CAPTURE and the multiset-valued procedure RECAPTURE.
To be able to exposit these new estimators precisely, it is necessary to introduce some formal notations concerning multisets and their properties.

\begin{definition}
Given a universal set $\mathcal{U}$, a {\bf multiset} $A$ (in $\mathcal{U}$)  is defined as a function $\chi_{A}: \mathcal{U} \rightarrow \mathbb{Z}^{*}$, where each element $a \in  \mathcal{U}$ is said to ``appear'' $\chi(a)$ times in $A$.  The concept of multiset generalizes the concept of set; the latter
is subject to the additional restriction $\forall a\in A$, $\chi(a) \in \{0,1\}$.

Given a multiset $A$, we define $A^* = \{ a \in A \;|\; \chi_A(a) > 0 \}$; note that $A^*$ is always a set.  We define $|A| = |A^*|$ 
and $\langle A \rangle = \sum_{x \in \mathcal{U}} \chi_{A}(x)$, the cardinality of a set and a multiset, respectively.  If $A$ and $B$ are multisets, then the multisets $A\cup B$, $A\cap B$, $A\uplus B$, and $A\setminus B$ are 
defined by taking 
\begin{eqnarray*}
\chi_{A\cup B}(x) &=& \max \;\{\; \chi_{A}(x),\chi_{B}(x) \;\}\; \\
\chi_{A\cap B}(x) &=& \min  \;\{\; \chi_{A}(x),\chi_{B}(x) \;\}\; \\
\chi_{A\uplus B}(x) &=& \chi_{A}(x) + \chi_{B}(x)\\
\chi_{A\setminus B}(x) &=& \max  \;\{\; 0, \chi_{A}(x) - \chi_{B}(x) \;\} ,
\end{eqnarray*}
for each $x\in \mathcal{U}$.  In addition, we define the ``filtering of $A$ by $B$'' by taking 
  \[
   \chi_{A\big\rvert {B}}(x) = \left\{
                \begin{array}{ll}
                  \chi_{A}(x) & \text{if  } \chi_{B}(x) > 0\\
                  0               &\text{if }  \chi_{B}(x) = 0
                \end{array}
              \right.
  \]
Note that if $A$ and $B$ are sets, then $A\big\rvert B \equiv A \cap B$; filtering may thus be viewed as a form of set intersection generalized to multisets.
\end{definition}

The processes of RDS-CAPTURE and RECAPTURE are illustrated in Figure \ref{caprecap-concept}.  The underlying graph $G=(V,E)$ is shown on the left.  On the right, we start from $s=1$ seed, giving each subject $c=2$ coupons to distribute.  The coupons flow along the red directed edges shown in the figure on the right, and the order in which subjects come in for their first interview is indicated by their label.  The process continues until our target sample size $n_0=7$ is reached.  When subjects return for their second interview (to collect referral incentives), they are asked to provide up to $p=5$ ``reports'' about their peers (other than their referrer and referees).  These are shown as directed green edges in the figure.  Note that (i) some individuals reported are in the capture sample (e.g. 2 is reported by $5$, and both were members of the first assay); (ii) some individuals are reported multiple times (e.g. A was reported by both 2 and 5); (iii) some individuals reported are outside the first assay (e.g. D was reported by 1, but D was not part of the first assay; (iv) some individuals report nobody (e.g. 3 reports nobody since all its peers were either its referrer or its referees.  In this scenario, the capture set  $S=\{1,2,3,4,5,6,7\}$, and the recapture multiset $rS=\{C,D,5,A,A,2,B,C,7,6,E\}.$\footnote{$1$ reports $C,D$; $2$ reports $5$; $3$ reports nobody; $4$ reports $A$; $5$ reports $2,A,B,C$; $6$ reports $7$; $7$ reports $6,E$.}

\begin{figure}[h!]
\centering
\includegraphics[width = 3.0in]{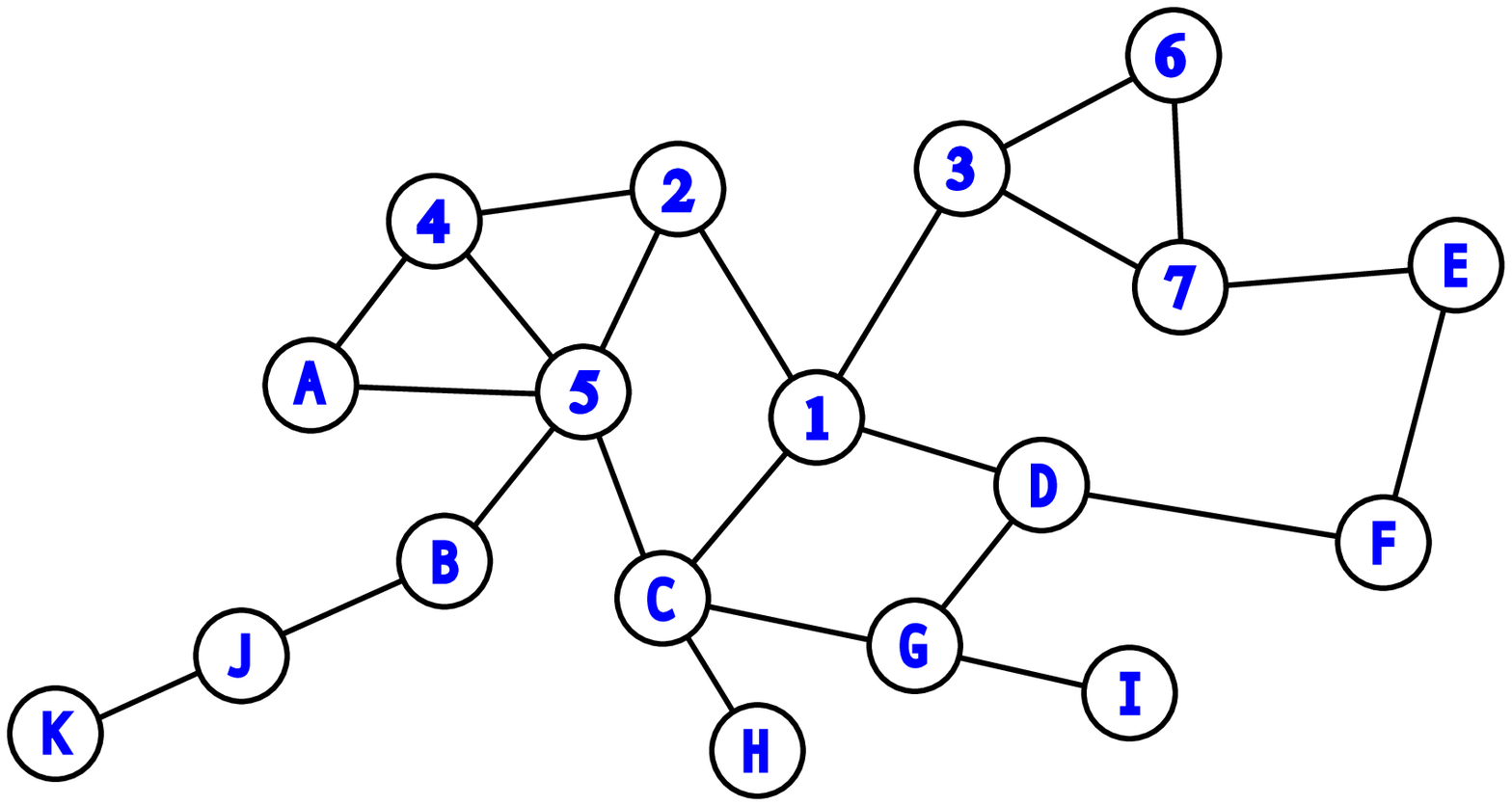}
\includegraphics[width = 3.0in]{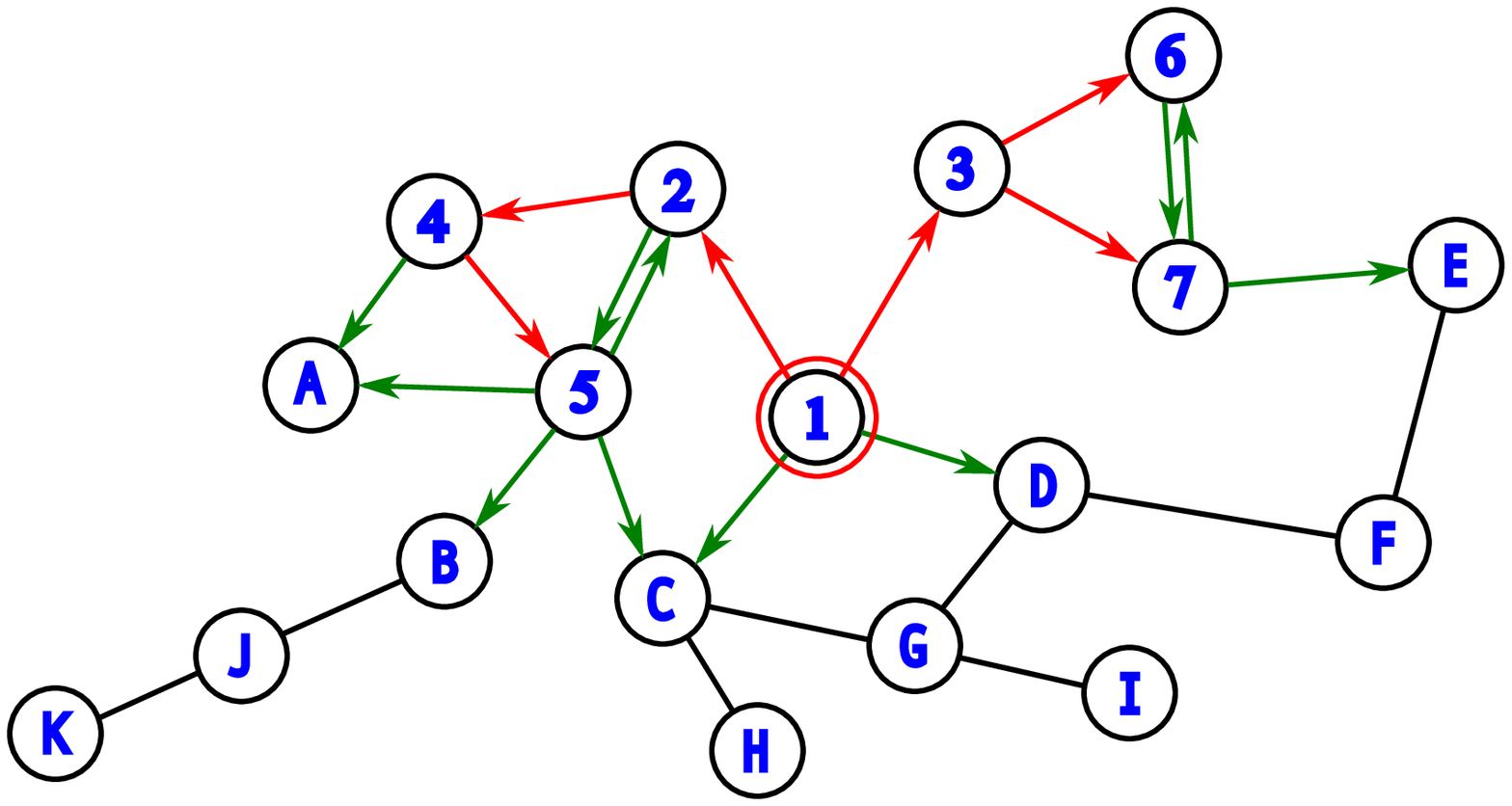}
\caption{RDS-CAPTURE (red edges) and RECAPTURE (green edges)}
\label{caprecap-concept}
\end{figure}
\section{New Estimators}

In light of the previous section, we see that given a graph $G=(V,E)$ with $|V|=n$, and four fixed parameters:
{\em
\begin{enumerate}
  \setlength{\itemindent}{0.5in}
\item Capture size $n_0$. 
\item Number of seeds $s$. 
\item Number of coupons $c$ to be given to each subject.
\item Number of reports per subject $p$.
\end{enumerate}
}
 (all of which are natural numbers), we can 
\begin{itemize}
  \setlength{\itemindent}{0.5in}
\item Sample a random capture set $(S,T) \coloneqq \text{RDS-CAPTURE}(G, s, c, n_0)$, and then 
\item Obtain a recapture set $rS \coloneqq \text{RECAPTURE}(G, (S,T) , p)$.  
\end{itemize}

As will be presented in later sections, we have verified empirically that for large families of graphs $G$
\begin{eqnarray}
\label{est1}
\frac{|V|}{|S|} \approx \frac{ |(rS)^*| }{| (rS \big| S)^* |}
\end{eqnarray}
Note that  expression (\ref{est1}) mirrors the classical proportionality expressed in Eqn. (\ref{classical-prop}).
The set $rS \big| S$ will be denoted $M(rS, S)$ as it is the set of ``matches'' between the reports collected 
(in the second assay/recapture) and the original subjects (of the first assay/capture).
This is formalized for general use in the next definition.
\begin{definition}
Let $A, B$ be multisets from the universe $\mathcal{U}$. Then the set of {\bf matches} of $A$ in $B$ is denoted $M(A,B) \coloneqq A\big\rvert B$.
\end{definition}

The empirically verified proportionality (\ref{est1})  leads to our first estimator:
\begin{eqnarray}
\label{n1estimation}
n_{1} = \frac{|S| \cdot  |(rS)^*| }{| M(rS, S)^*|}.
\end{eqnarray}

A procedural description of our third population estimator defined in Eqn. (\ref{n1estimation}) is given in the pseudocode of Algorithm \ref{rdscaprecap-algo} (pp. \pageref{rdscaprecap-algo}).

\begin{algorithm}
\caption{Estimation using RDS Capture/Recapture}\label{rdscaprecap-algo}
\begin{algorithmic}[1]
\Procedure{HASHING-ESTIMATE}{$G=(V,E)$, $s$, $c$ , $n_0$}
\State $(S,T) \gets $RDS-CAPTURE$(G, s, c, n_0)$ or collected by survey
\State $rS \gets $RECAPTURE$(G, (S,T), p)$ or collected by survey
\State \Return $\frac{|S| \cdot  |(rS)^*| }{| M(rS, S)^* |}$
\EndProcedure
\end{algorithmic}
\end{algorithm}

Estimator (\ref{n1estimation}) works well in practice, but the rationale for its efficacy is necessarily distinct from the 
reasoning underlying the classical Lincoln-Petersen estimator.  In particular, the set $S$ and multiset 
$rS$ do {\em not} constitute independent assays because of their structural relationship within the 
graph $G$.  In other words, for all $u,v \in V$ it is generally the case that
\begin{eqnarray}
\label{bad-not-independent}
Pr(v\in rS \:|\; u\in S) \neq Pr(v\in rS \:|\; u\not\in S).
\end{eqnarray}
Informally stated, knowing whether or not $u$ was placed in $S$ can impact the probability that $v$ will be placed in $rS$.  For example, if $(u,v) \in E$ then the left hand-side of the above inequality (\ref{bad-not-independent}) always evaluates to $1$; if (additionally) $v$ has degree $1$ in $G$, then the right-hand side always evaluates to $0$.
Indeed, as we shall see, estimator (\ref{n1estimation}) works well in practice only for certain families of graphs.  Fortunately, many of these families are ubiquitous
in social organization.

\subsection{Anonymity via Hashing}

Significant obstacles arise in the direct application of estimator (\ref{n1estimation}) in practice.  In our field work, we frequently seek to measure the sizes of stigmatized networked populations (e.g. people who inject drugs, sex workers, criminal elements, etc.) Individuals within these social communities naturally seek to remain ``hidden'', and thus the membership of sets $S$ and $rS$ is often not explicitly knowable because individuals are reluctant to unambiguously identify themselves or their social network peers.  Nevertheless, simulation experiments show that estimator (\ref{n1estimation}) is effective in large classes of networked populations, so in settings where anonymity is not required, this estimator may be practically applied. 

To begin to address questions of population estimation under the requirement of anonymity, we introduce a ``coding'' or {\bf hashing}  \cite{CARTER1979} function $\uppsi:V\rightarrow H$ which provides anonymity to members of the  population.  In practice, $\uppsi(v)$ might be an obtained by amalgamating a well-defined tuple of characteristics of $v$ which are known to $v$'s friends (e.g. $v$'s gender, phone number, hair color, approximate age, racial category, etc.)   A related coding technique was used in our earlier work on estimating the size of the methamphetamine user population in New York City, where it was referred to as the ``telefunken'' code \cite{TELEFUNKEN2012}.  Central to hashing is that it is many-to-one, and hence not readily invertible.  For our purpose, what is important is the following well-known property of universal hash functions:  For all $v \in V$ and $x \in H$, the probability $Pr(\uppsi(v) = x) \approx 1/|H|$.   Supporting anonymity in this manner introduces a new additional 5th parameter:
{\em 
\begin{enumerate}
  \setlength{\itemindent}{0.5in}
  \setcounter{enumi}{4}
\item Hash space size $|H|$.
\end{enumerate}
}
Let $\uppsi$ be the random hashing function from $V$ to $H$; note that $\uppsi$ takes sets (and multisets) in $V$ to multisets in $H$. Under the 
action of $\uppsi$, we have $\uppsi S$, $\uppsi rS$, and $M( \uppsi rS, \uppsi S)$ which are all multisets in $H$.  These sets are knowable even in 
settings where anonymity is required, since $\uppsi$ is assumed to be a random function, and hence difficult to invert.

The action of $\uppsi$ on the population $V$ and the associated sets and multisets are illustrated in Figure \ref{psi-concept}.

\begin{figure}[h]
\centering
\includegraphics[width = 4.0in,scale=1]{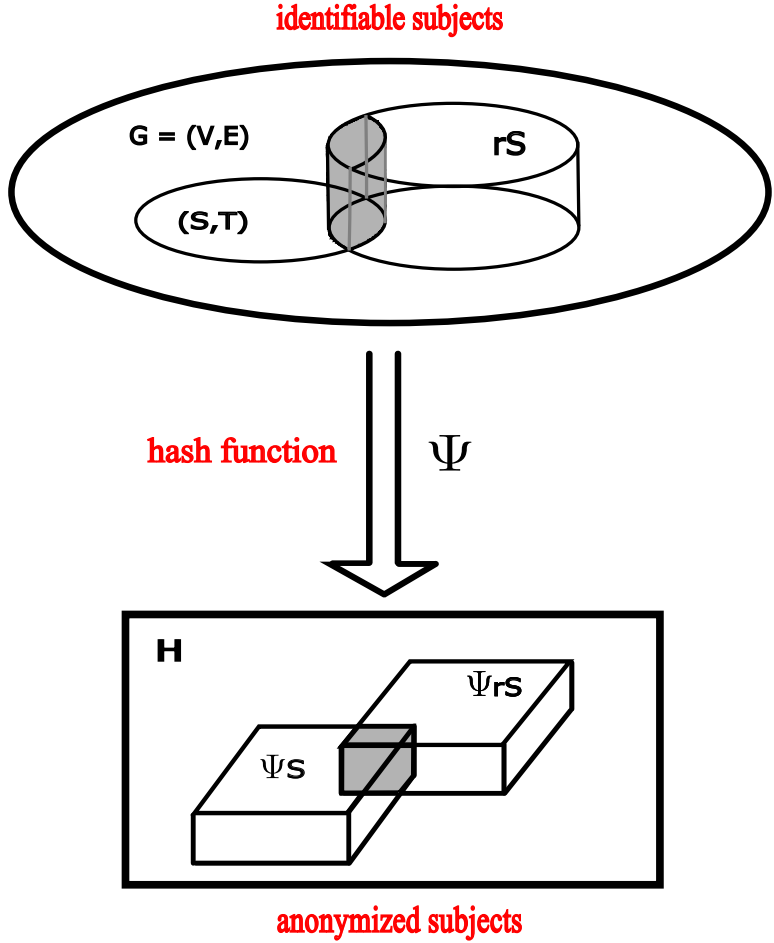}
\caption{The mapping of $\uppsi$ of $V[G]$ into $H$. In space $G$, $S$ is a set and $rS$ is a multiset. Under the mapping of the hash function $\uppsi$, their images $\uppsi S$ and $\uppsi rS$ are both multisets in the hash space $H$. }
\label{psi-concept}
\end{figure}

We may now consider the following estimator as a natural analogue for (\ref{n1estimation}) in the presence of coding function $\uppsi$ for population size:
\begin{eqnarray}
\label{n2estimation}
n_{2} = \frac{\langle \uppsi S \rangle \cdot \langle \uppsi rS \rangle}{\langle M(\uppsi rS, \uppsi S) \rangle}
\end{eqnarray}
Since 
\begin{eqnarray*}
\langle \uppsi S \rangle &=& |S|\\
\langle M(\uppsi rS, \uppsi S)\rangle &\geqslant& \langle \uppsi M( rS , S) \rangle = \langle M(rS, S) \rangle \geqslant | (rS \big| S)^* |\\
\langle \uppsi rS \rangle &=& \langle rS \rangle \geqslant |(rS)^*|
\end{eqnarray*}
we know that $n_2$ and $n_1$ may diverge.  In seeking to improve estimator (\ref{n2estimation}), we have two possible opportunities---corresponding to the two inequalities seen above. 
We will pursue each successively, in turn.

First, 
some of elements in $M(\uppsi rS, \uppsi S)$
arise simply because the large multisets $rS$ and set $S$ are being randomly 
hashed to a finite set $H$, and not because $rS$ and $S$ share a non-empty intersection.  To quantify how many such ``false'' matches are occurring by sheer chance, 
we need to introduce some formal notation.

\begin{definition}
Let $A$ be a set and $B$ be a multiset on the universe $\mathcal{U}$. Let $\uppsi: \mathcal{U} \rightarrow H$ be a function uniformly random distributes each element $u \in \mathcal{U}$ to $H$. Then the set of {\bf false matches} of $B$ in $A$ is denoted 
 $$F_{\uppsi}(A,B) := (\uppsi A \cap \uppsi B) \backslash \uppsi M(B,A).$$
\end{definition}

\begin{lemma}
\label{baseline}
Let $A$ be a set and $B$ be a multiset on the universe $\mathcal{U}$ with $A \cap B = \emptyset$. Let $\uppsi: \mathcal{U} \rightarrow H$ be a random function which sends each element $u \in \mathcal{U}$ to a uniformly random element $\uppsi(u) \in H$. The expected number of false matches is given by
\begin{eqnarray}
\label{E(f)}
E [\langle F_{\uppsi}(A,B) \rangle] = \frac{\langle B \rangle}{|B^{*}|} \cdot \sum^{\min \{a, b, |H|\}}_{k = 0} k \cdot {|A| \choose k} {|B^{*}| \choose k} \Big(\frac{k}{|H|}\Big)^{2k} \big(\frac{|H|-k}{|H|}\Big)^{|A|+|B^{*}|-2k}
\end{eqnarray}
\end{lemma}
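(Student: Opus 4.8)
The plan is to reduce $E[\langle F_{\uppsi}(A,B)\rangle]$ to the expected cardinality of a purely spurious hashed intersection, and then to evaluate that expectation by a balls-in-bins argument.

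First I would dispose of the subtracted term in the definition of $F_{\uppsi}$. Since $A\cap B=\emptyset$, for every $x\in\mathcal{U}$ at most one of $\chi_{A}(x)$, $\chi_{B}(x)$ is positive, so $\chi_{B\rvert A}(x)=0$ for all $x$; hence $M(B,A)=B\rvert A=\emptyset$ and $\uppsi M(B,A)=\emptyset$. Consequently $F_{\uppsi}(A,B)=\uppsi A\cap\uppsi B$, and the lemma is equivalent to computing $E[\langle \uppsi A\cap\uppsi B\rangle]=E\big[\sum_{x\in H}\min\{\chi_{\uppsi A}(x),\chi_{\uppsi B}(x)\}\big]$, where $\chi_{\uppsi A}(x)=|\{a\in A:\uppsi(a)=x\}|$ and $\chi_{\uppsi B}(x)=\sum_{b\in B^{*}:\uppsi(b)=x}\chi_{B}(b)$.

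Next I would model $\uppsi$ as an occupancy process: the $|A|$ distinct elements of $A$ together with the $|B^{*}|$ distinct elements of $B^{*}$ --- which are disjoint since $A\cap B=\emptyset$ --- are thrown independently and uniformly into the $|H|$ cells of $H$, each $B^{*}$-element carrying its multiplicity $\chi_{B}(\cdot)$. The bin-wise summand $\min\{\chi_{\uppsi A}(x),\chi_{\uppsi B}(x)\}$ is controlled by how many $A$-elements and how many $B^{*}$-elements collide in cell $x$, so by linearity of expectation I would organize the computation over the number $k$ of colliding elements per side: configurations in which some choice of $k$ elements of $A$ and $k$ elements of $B^{*}$ occupies a common block of $k$ cells while the other $|A|+|B^{*}|-2k$ elements land outside that block. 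Counting which elements participate contributes the factor $\binom{|A|}{k}\binom{|B^{*}|}{k}$; the probability that the $2k$ designated elements fall in the $k$ designated cells while the rest avoid them contributes $(k/|H|)^{2k}\big((|H|-k)/|H|\big)^{|A|+|B^{*}|-2k}$; and each such configuration contributes $k$ to the set-level collision count. Finally, to convert a $B^{*}$-side collision into the number of false matches it actually generates --- one per report carrying the colliding code --- I would replace each $B^{*}$-element's multiplicity by the average $\langle B\rangle/|B^{*}|$, which factors out of the whole sum as the stated prefactor; summing $k$ from $0$ to $\min\{|A|,|B^{*}|,|H|\}$ then assembles the displayed formula.

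The step I expect to be the main obstacle is precisely this last passage: the exact occupancy expectation involves the clumsy $\min$ and, more seriously, depends on the entire multiplicity profile of $B$, whereas the target formula depends only on the aggregates $\langle B\rangle$ and $|B^{*}|$. Getting there requires (i) the mean-multiplicity substitution and (ii) a balanced-collision accounting that uses a single index $k$ on both sides and does not range independently over which block of $k$ cells hosts the collisions; both are exact only up to lower-order terms in the regime $k\ll|H|$, so a careful writeup must either carry these error terms or read the equality as asymptotic in $|H|$. By contrast, the reduction $F_{\uppsi}(A,B)=\uppsi A\cap\uppsi B$, the independence of the $A$- and $B$-throws, and the binomial enumeration are all routine.
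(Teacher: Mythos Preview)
Your proposal follows essentially the same route as the paper's own proof: reduce to $\uppsi A\cap\uppsi B$ using $A\cap B=\emptyset$, model $\uppsi$ as a balls-in-bins occupancy process, replace each $B^{*}$-element's multiplicity by the mean $\langle B\rangle/|B^{*}|$, and sum over a single index $k$ counting colliding elements on each side with the binomial factors $\binom{|A|}{k}\binom{|B^{*}|}{k}$. The paper in fact carries out exactly the heuristic you describe---it simply \emph{assumes} the uniform-multiplicity profile and multiplies the two ``$X_A=k$'' and ``$X_{B^{*}}=k$'' probabilities without discussing the dependence of the target set on the outcome---so your explicit caveat that the resulting identity is asymptotic rather than exact is if anything more careful than what the paper makes explicit.
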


\begin{proof}
First note that $|A| = \langle \uppsi A \rangle$ and $\langle B \rangle = \langle \uppsi B \rangle$. In the set A, there are $|A|$ elements with cardinality~1. In the set $B$, we may assume there are $|B^{*}|$ elements and each has a cardinality $\langle B\rangle/|B^{*}|$. 

Given $\uppsi$ is a random function, the image of $A$ and $B$ under $\uppsi$ are (set-valued) random variables, and their set cardinalities in the target intersection set is also a random variables.

For any target set $T\subset H$.  Let $X_A$ (respectively, $X_{B^*}$) denote the number of elements of $A$ (respectively, ${B^*}$) which are mapped by $\uppsi$ into $T := (\uppsi A \cap \uppsi B)^{*}$ .   If $|(\uppsi A \cap \uppsi B)^{*})|=k$, then

\begin{align*} 
P\big(X_{A} = k \big) &= {|A| \choose k}\Big(\frac{k}{|H|}\Big)^{k} \big(\frac{|H|-k}{|H|}\Big)^{|A|-k} \\ 
P\big(X_{B^{*}} = k \big) &= {|B^{*}| \choose k}\Big( \frac{k}{|H|}\Big)^{k} \big(\frac{|H|-k}{|H|}\Big)^{|B^{*}|-k}
\end{align*}

Given $A \cap B = \emptyset$, we want to quantify the cardinality of multiset $\uppsi B \big\rvert \uppsi A$, we focus on the set 
$$C := B^{*} \cap \uppsi^{-1}((\uppsi A)^{*}\cap (\uppsi B)^{*})$$  
Summing multiplicity of each element in set $C$ gives us the cardinality of multiset $\uppsi B \big\rvert \uppsi A$. Our estimate of the expected 
cardinality of the false match multiset is then expressible as:
\begin{equation*}
  \begin{split}
   E[\langle F_{\uppsi}(A,B) \rangle]
   &= E \bigg ( \sum_{y \in C} \chi_{B} (y) \bigg )
  \end{split}
 \end{equation*}
Let $m' := \min \{|A|, |B^{*}|, |H|\}$.  The right-hand quantity above can then be further explicitly described as follows:
\begin{equation*}
  \begin{split}
   E \bigg ( \sum_{y \in C} \chi_{B} (y) \bigg )
   &=  \frac{\langle B \rangle}{|B^{*}|} \cdot \sum^{m'}_{k = 0} k \cdot P\big(X_{A} = k \big) \cdot P\big(X_{B^{*}} = k \big)\\
   &= \frac{\langle B \rangle}{|B^{*}|} \cdot \sum^{m'}_{k = 0} k \cdot {|A| \choose k} {|B^{*}| \choose k} \Big(\frac{k}{|H|}\Big)^{2k} \big(\frac{|H|-k}{|H|}\Big)^{|A|+|B^{*}|-2k}.
  \end{split}
 \end{equation*}
The lemma is proved.
\end{proof}

Returning to the task of improving estimator (\ref{n2estimation}), we see that one can now replace the quantity 
$\langle M(\uppsi rS, \uppsi rS)\rangle$ used as the denominator in estimate $n_2$, with $\langle M(\uppsi rS, \uppsi rS)\rangle - E[\langle F_{\uppsi}(S, rS) \rangle]$, since by Lemma \ref{baseline} 
we expected $E[\langle F_{\uppsi}(S, rS) \rangle]$ matches even when {\em disjoint} sets having size/mass equivalent to that of $S$ and $rS$ are 
mapped to a coding space of size $|H| = m$.  

The second idea in improving estimator (\ref{n2estimation})
comes from realizing that although $\langle \uppsi rS \rangle$ used in the numerator of estimate $n_2$ is greater than $|(rS)^*|$ used in the numerator of $n_1$, one may be able to derive an approximation
for the latter quantity.
The following combinatorial observation is useful here: If $|(rS)^*|$ balls are cast randomly into $|H|$ boxes, then the expected number of
empty boxes is readily seen to be
$$|H|\cdot\left( \frac{|H|-1}{|H|} \right)^{|(rS)^*|}$$

Although we do not know $|(rS)^*|$, we do know that when this set was mapped by $\uppsi$ into $H$,
its image avoided $|H| - | (\uppsi rS)^*|$ elements of $H$.  Assuming the mean outcome, we may estimate $|(rS)^*|$
by solving for it in the equation:
\begin{eqnarray*}
|H| - | (\uppsi rS)^*| & = & |H|\cdot\left( \frac{|H|-1}{|H|} \right)^{|(rS)^*|}
\end{eqnarray*}
Such an analysis leads us to estimate
$$
|(rS)^*| \approx \frac{\log \left(1 - \frac{| (\uppsi rS)^*|}{|H|}\right)}{\log \left(1 - \frac{1}{|H|}\right)}.
$$

Combining the previous two conclusions, we get our third and final population estimate
\begin{eqnarray}
\label{n3estimation}
n_{3} = \frac{\langle \uppsi S \rangle}{\langle M(\uppsi rS, \uppsi S) \rangle - E[\langle F_{\uppsi}(S, rS) \rangle]} \cdot \frac{\log \left(1 - \frac{| (\uppsi rS)^*|}{|H|}\right)}{\log \left(1 - \frac{1}{|H|}\right)}
\end{eqnarray}
where $E[\langle F_{\uppsi}(S, rS) \rangle]$ is as defined in Eqn. (\ref{E(f)}).

A procedural description of our third population estimator defined in Eqn. (\ref{n3estimation}) is given in the pseudocode of Algorithm \ref{hashing-algo} (pp. \pageref{hashing-algo}).

\begin{algorithm}
\caption{Estimation using RDS Capture/Recapture {\em with Anonymity using Hashing}}\label{hashing-algo}
\begin{algorithmic}[1]
\Procedure{HASHING-ESTIMATE}{$G=(V,E)$, $s$, $c$ , $n_0$, $p$, $m$}
\State $(S,T) \gets $RDS-CAPTURE$(G, s, c, n_0)$ or collected by survey
\State $rS \gets $RECAPTURE$(G, (S,T), p)$ or collected by survey
\State $\uppsi \gets $ a function from $V \rightarrow \{1, 2, \ldots, m\}$, random or collected by survey
\State $\uppsi S \gets \uppsi(S)$
\State $\uppsi rS \gets \uppsi(rS)$
\State \Return $\frac{\langle \uppsi S \rangle}{\langle M(\uppsi rS, \uppsi S) \rangle - E[\langle F_{\uppsi}(S, rS) \rangle]} \cdot \frac{\log \left(1 - \frac{| (\uppsi rS)^*|}{|H|}\right)}{\log \left(1 - \frac{1}{|H|}\right)}$
\EndProcedure
\end{algorithmic}
\end{algorithm}

\subsection{Avoiding Pathologies via Bootstrap}
In practice, several difficulties arise in applying the $n_3$ estimator (\ref{n3estimation}).  The most significant of these
is that the denominator quantity can become negative if
$$
E[\langle F_{\uppsi}(S, rS) \rangle] > \langle M(\uppsi rS, \uppsi S) \rangle.
$$
Such situations can arise since $E[\langle F_{\uppsi}(S, rS) \rangle]$ is an expectation,
 but for a specific choice of $\uppsi$,  $M(\uppsi rS, \uppsi S)$ may be lower than this
 expectation.  This leads to the embarrassing possibility of negative population estimates.
 At the same time, we note that over large numbers of trials, the $\langle M(\uppsi rS, \uppsi S) \rangle - E[\langle F_{\uppsi}(S, rS) \rangle]$ is on average always positive.  The problem in practice of course, is that one cannot afford to conduct multiple trials.  Sampling a population and interviewing $n_0$ subjects
 are both time consuming and expensive.  If at he end of such a process, we were to get a negative population estimate, all our efforts would be in vain.  How can we address the situation? 
 
Our approach here is to carry out a bootstrap procedure.  Informally, suppose that we have completed our survey, and as describe above, obtained our RDS-based capture set $\uppsi S$ and recapture multiset $\uppsi rS$.  We now simulate artificial scenarios in which only $\alpha$ fraction of the subjects in $S$ were discovered ($0 \leqslant \alpha \leqslant 1$), along the same referral tree as in our actual survey.  This is a thought experiment, and does not require additional sampling of the population.  The thought experiment yields new subset $\uppsi S' \subseteq \uppsi S$ and submultiset $\uppsi rS' \subseteq \uppsi rS$.  We compute the quantity $\langle M(\uppsi rS', \uppsi S') \rangle - E[\langle F_{\uppsi}(S', rS') \rangle]$ and if this is positive, accumulate it into a multiset $D$ of denominator estimates.  We then use the average Ave($D$) as the denominator, in place of $\langle M(\uppsi rS', \uppsi S') \rangle - E[\langle F_{\uppsi}(S', rS') \rangle]$, when computing the $n_3$ estimator (\ref{n3estimation}).  A procedural description of this ``bootstrapped'' version of our third population estimator is given in the pseudocode of Algorithm \ref{bootstrap-algo} (pp. \pageref{bootstrap-algo}).

\begin{algorithm}
\caption{{\em Bootstrapped} Estimation using RDS Capture/Recapture with Anonymity using Hashing}\label{bootstrap-algo}
\begin{algorithmic}[1]
\Procedure{BOOTSTRAPPED-ESTIMATE}{$G=(V,E)$, $s$, $c$ , $n_0$, $p$, $m$, $\alpha$, $\kappa$}
\State $(S,T) \gets $RDS-CAPTURE$(G, s, c, n_0)$ or collected by survey
\State $rS \gets $RECAPTURE$(G, (S,T), p)$ or collected by survey
\State $\uppsi \gets $ a function from $V \rightarrow \{1, 2, \ldots, m\}$, random or collected by survey
\State $\uppsi S \gets \uppsi(S)$
\State $\uppsi rS \gets \uppsi(rS)$
\State $D \gets \emptyset$
\For {$i \gets 1 \ldots \kappa$} 
\State $(S',T') \gets $RDS-CAPTURE$((S,T), s, c, \lceil \alpha n_0 \rceil)$
\State $rS' \gets $RECAPTURE$(G, (S',T'), p)$
\State $\uppsi S' \gets \uppsi(S')$
\State $\uppsi rS' \gets \uppsi(rS')$
\State $d_i \gets \langle M(\uppsi rS', \uppsi S') \rangle - E[\langle F_{\uppsi}(S', rS') \rangle]$
\State $D \gets D \uplus \{d_i\}$
\EndFor
\State $M \gets $Ave$(D)$
\State \Return $\frac{\langle \uppsi S \rangle}{{M}} \cdot \frac{\log \left(1 - \frac{| (\uppsi rS)^*|}{m}\right)}{\log \left(1 - \frac{1}{m}\right)}$
\EndProcedure
\end{algorithmic}
\end{algorithm}

\nocite{STIRLING1988}

\section{Empirical Results}

The results presented in this section are experimental in nature.  We would like to evaluate the extent to which
our estimates $n_1$ and $n_3$, presented in Eqns. (\ref{n1estimation}) and (\ref{n3estimation}) respectively, are able to predict the size of the network from which the sample is drawn.  In the graphs
that follow, we refer to the $n_1$ estimate as
``RDS full-knowledge'' and the $n_3$ estimate as ``RDS + ANON/hashing''.

As a {\bf baseline} scenario, we take the following as our parameter settings.
{\em
\begin{enumerate}
  \setlength{\itemindent}{0.5in}
\item Capture size $n_0=500$. 
\item Number of seeds $s=6$. 
\item Number of coupons $c=3$ to be given to each subject.
\item Number of reports per subject $p=25$.
\item Hash space size $|H|=m=3{,}125$.
\end{enumerate}
}

The rationale for these values is derived from standard implementation of RDS in health research such as the National HIV Behavioral Survey (Injection Drug User round), where samples of 500 current users are recruited via RDS using 5-10 seeds and 3 referral coupons for each respondent. Similar measures have been adopted internationally for surveillance and estimation efforts on related key populations.

In a series of 5 experiments, we vary each of the above 5 parameters, one at a time, while keeping the other 4 parameters 
fixed at their baseline values.  Each of these series of experiments iss carried out 100 times on each of 
10 Barabasi-Albert networks of size 6{,}250, 12{,}500, 25{,}000, and 50{,}000 nodes.
In all experiments, the quantity $E[\langle F_{\uppsi}(S, rS) \rangle]$ appearing in the definition of $n_3$ and Eqn. (\ref{n3estimation}) 
is estimated via Monte-Carlo bootstrapping techniques described above rather than by explicit exact computation
using the closed-form expression (\ref{E(f)}).

Figure \ref{rds-graphs} shows how population estimates change as the RDS {\bf capture sample size} is varied from 
200 subjects to 1000 subjects inside networked populations
 comprised of (a) 6{,}250, (b) 12{,}500, (c) 25{,}000, and (d) 50{,}000 individuals.  In these experiments,
the number of seeds $s=6$, the number of coupons given to each subject $c=3$,
the number of reports provided by each subject is $p=25$, and the hash space size $|H|=3{,}125$.
These results show that, for hidden populations ranging in size from 6{,}250 to from 50{,}000, the method described above produces meaningful estimates with moderate variance when the hashing of respondent identity does not take place. This is especially true for RDS sample sizes ranging from 500 to 1000 (which is in line with other expectations for RDS design effects). Where respondent identities are hashed (and anonymity maintained) RDS samples of 500 or more produce small over-estimations for networks of up to 25{,}000 individuals. In populations larger than 25{,}000, the hashed results show extremely large variance regardless of the RDS sample size (up to an experimental sample size of 1000). 

\begin{figure}[p]
\centering
\begin{tabular}{cccc}
\subcaptionbox{Actual population size 6{,}250\label{rds6}}{\includegraphics[width = 3.2in]{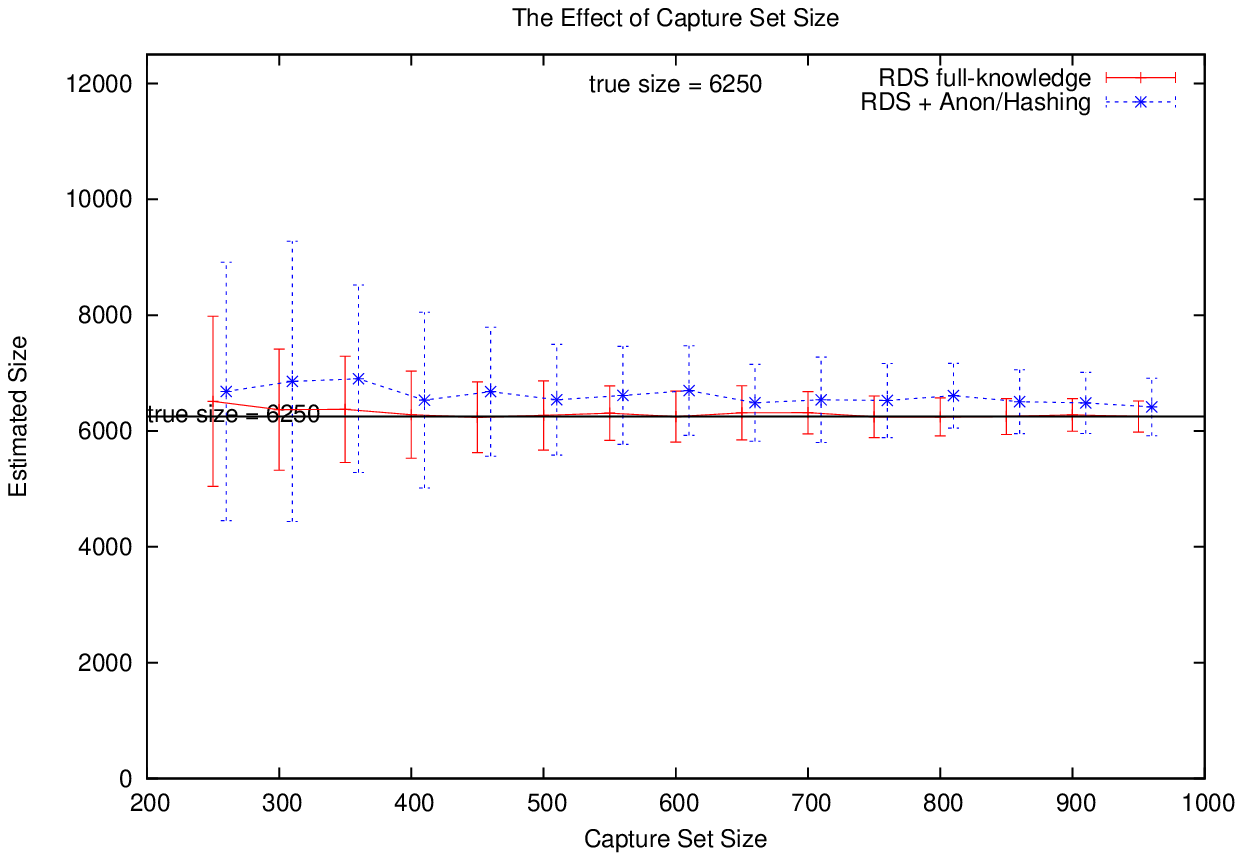}} &
\subcaptionbox{Actual population size 12{,}500\label{rds12}}{\includegraphics[width = 3.2in]{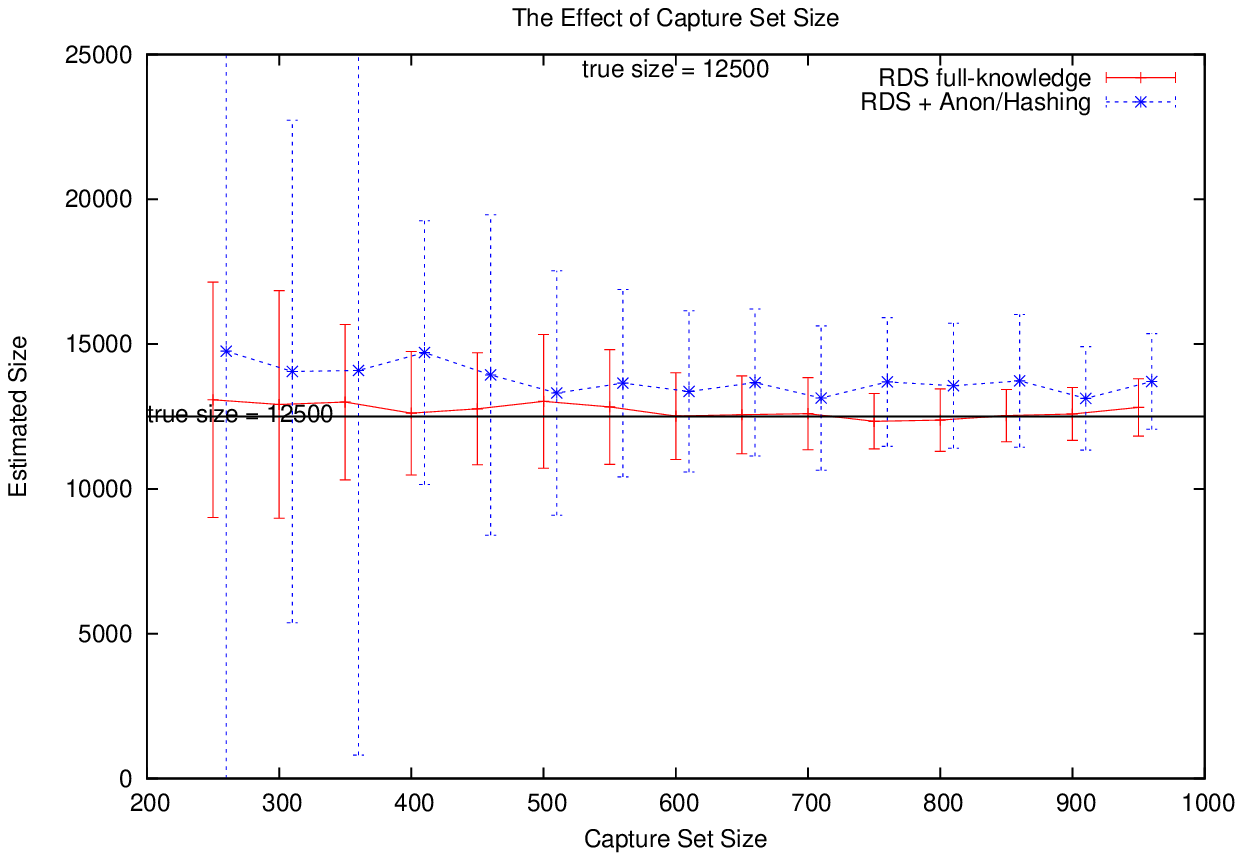}} \\
\subcaptionbox{Actual population size 25{,}000\label{rds25}}{\includegraphics[width = 3.2in]{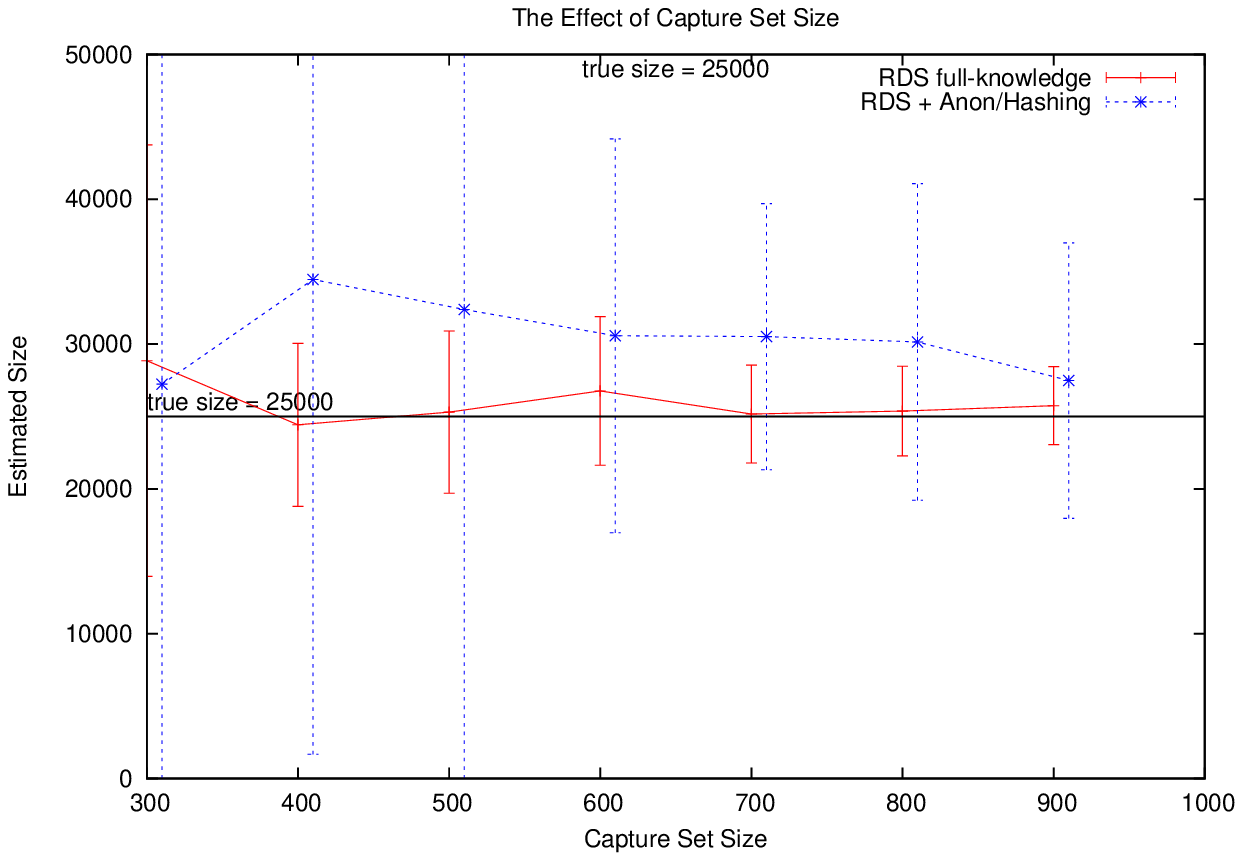}} &
\subcaptionbox{Actual population size 50{,}000\label{rds50}}{\includegraphics[width = 3.2in]{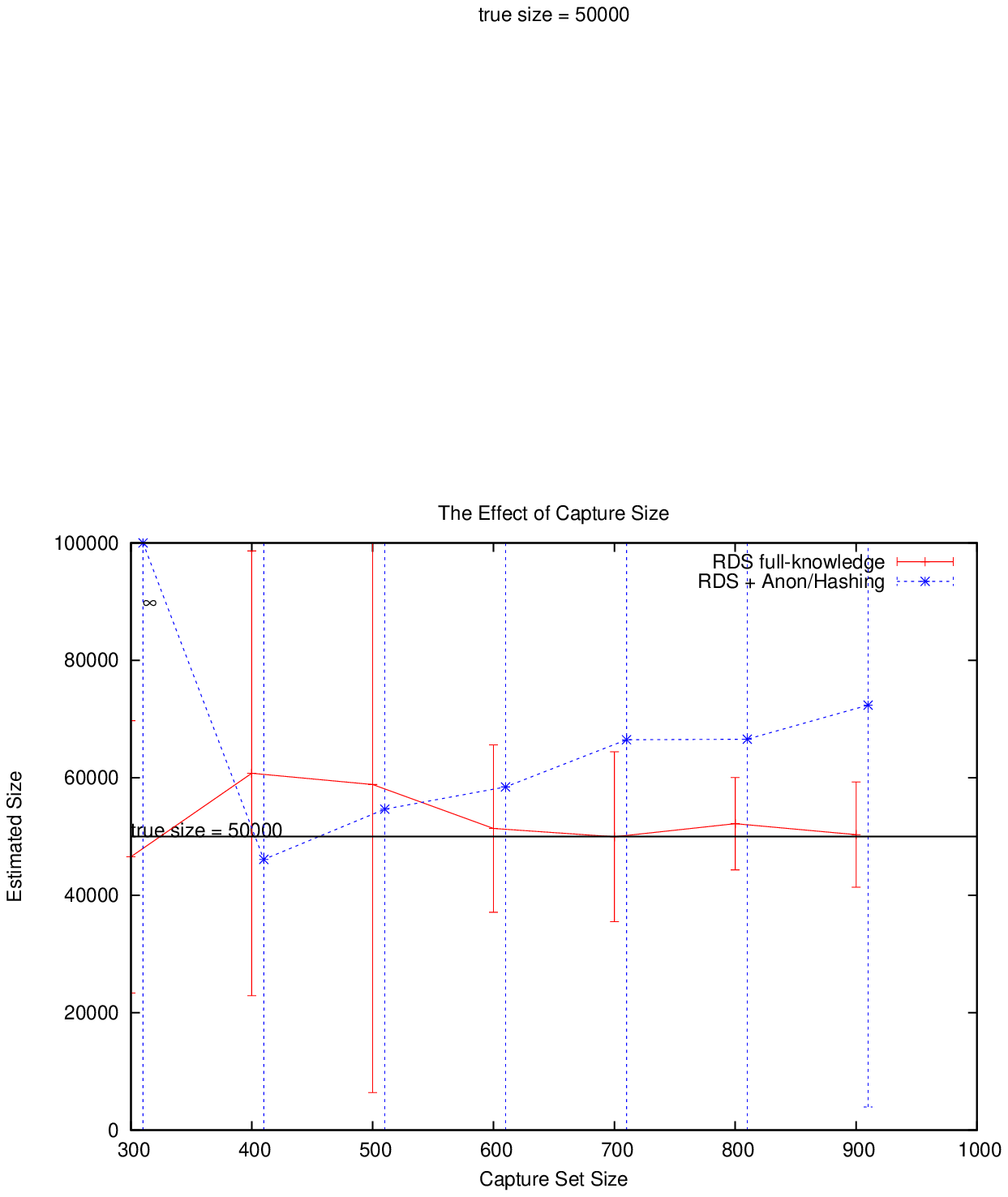}}
\end{tabular}
\caption{The impact of capture set size on population size estimate}
\label{rds-graphs}
\end{figure}

Figure \ref{seeds-graphs} shows how population estimates change as the {\bf number of seeds} varies from 
2 to 30 subjects inside networked populations
 comprised of (a) 6{,}250, (b) 12{,}500, (c) 25{,}000, and (d) 50{,}000 individuals.  In these experiments,
the size of the RDS capture sample $n_0=500$, the number of coupons given to each subject $c=3$,
the number of reports provided by each subject was $p=25$, and the hash space size $|H|=3{,}125$.
Notable in these results is that, for populations below 12{,}500, the number of seeds has little effect on the reliability or accuracy of the estimates. At or below 12{,}500, the estimation procedure produces relatively accurate predictions, with hashed results overestimating the true population size by 10-15 percent. In general, these results bode well for the use of this method under ordinary RDS implementation where 5-10 initial seeds are a regular rule of thumb. Above 12{,}500, the unhashed results remain relatively accurate, but the overall variance across simulation runs grows as the network population grows. The variation in the hashed results for these same large networks is itself very large, and seemingly unaffected by the number of seeds.  

\begin{figure}[p]
\centering
\begin{tabular}{cccc}
\subcaptionbox{Actual population size 6{,}250\label{seeds6}}{\includegraphics[width = 3.2in]{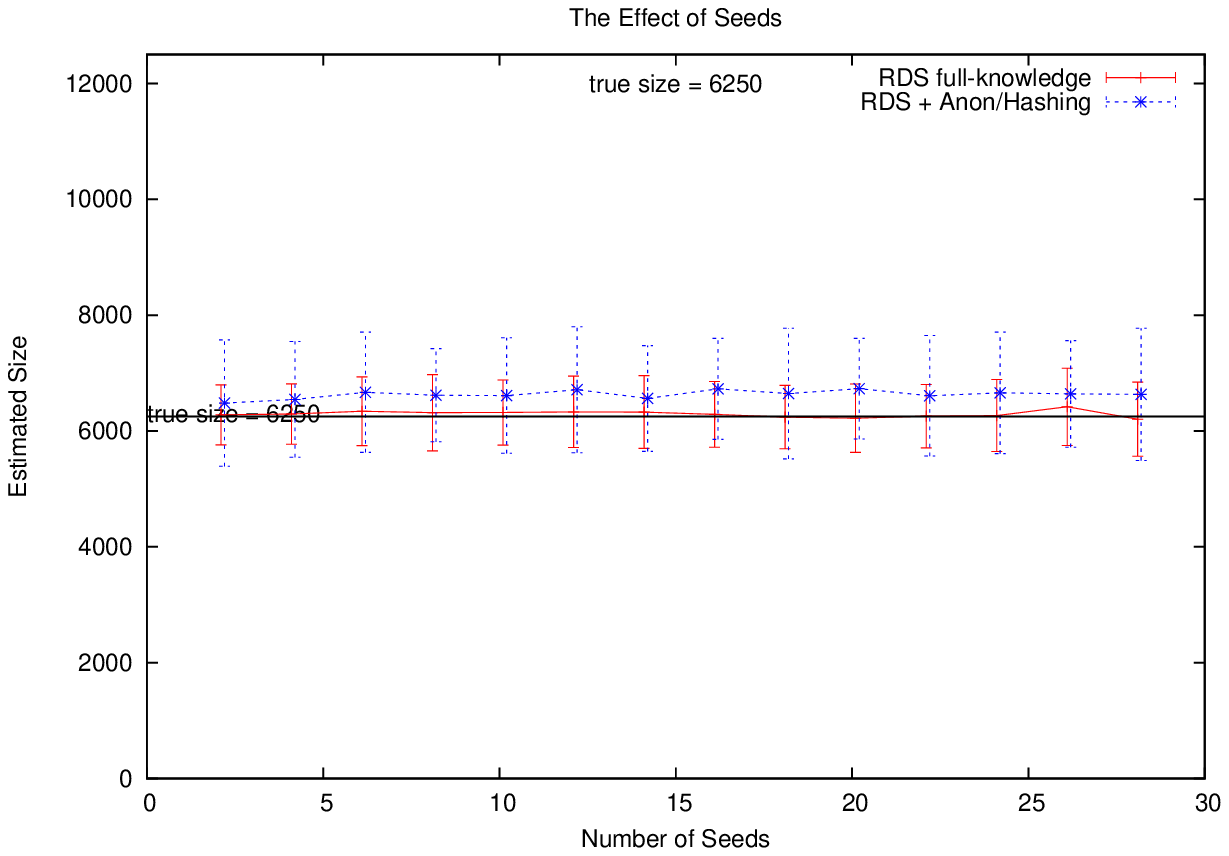}} &
\subcaptionbox{Actual population size 12{,}500\label{seeds12}}{\includegraphics[width = 3.2in]{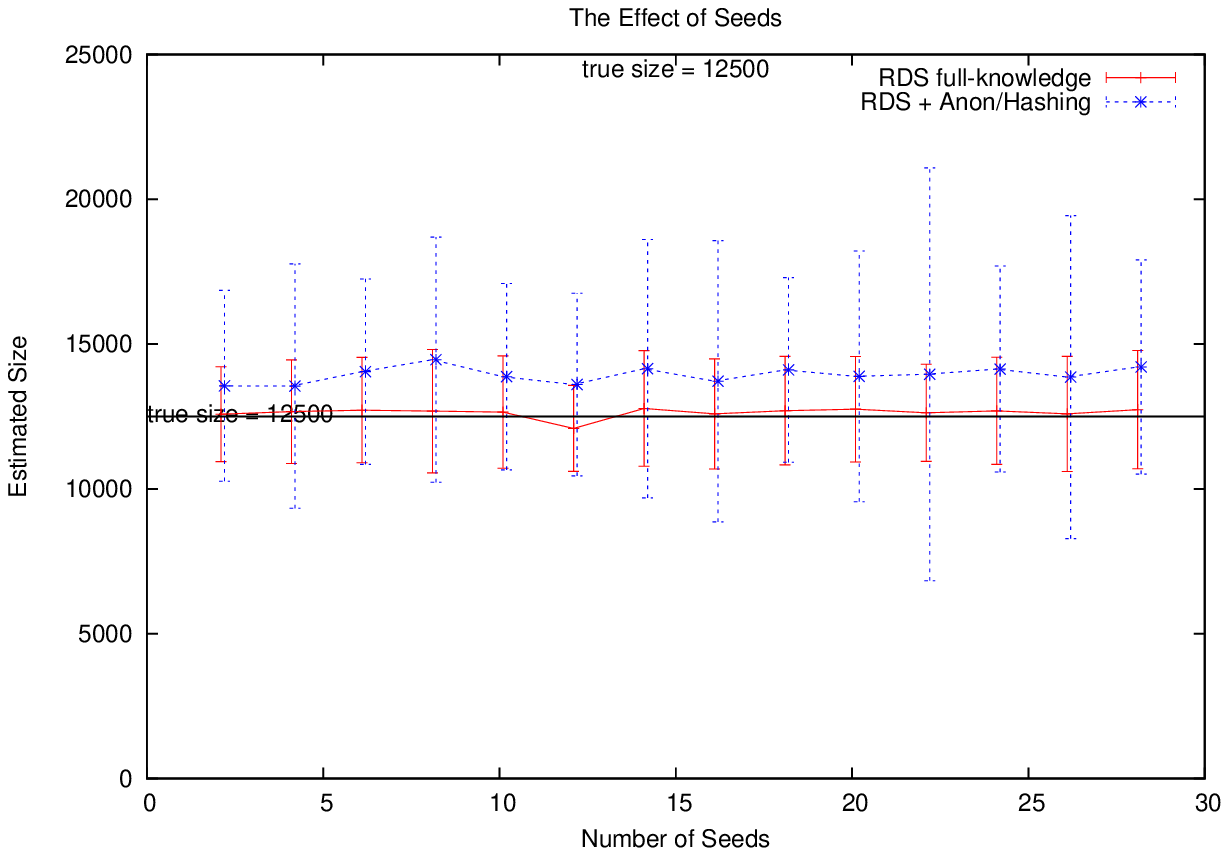}} \\
\subcaptionbox{Actual population size 25{,}000\label{seeds25}}{\includegraphics[width = 3.2in]{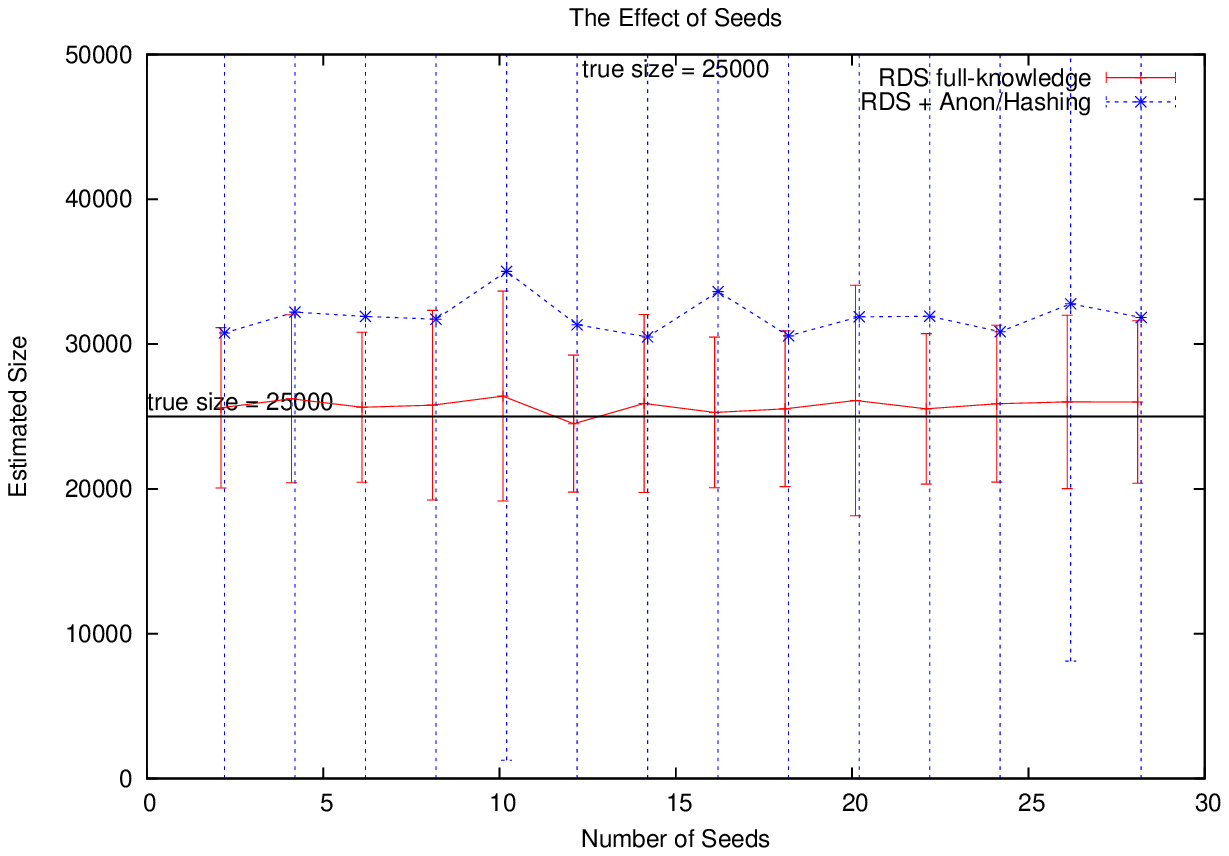}} &
\subcaptionbox{Actual population size 50{,}000\label{seeds50}}{\includegraphics[width = 3.2in]{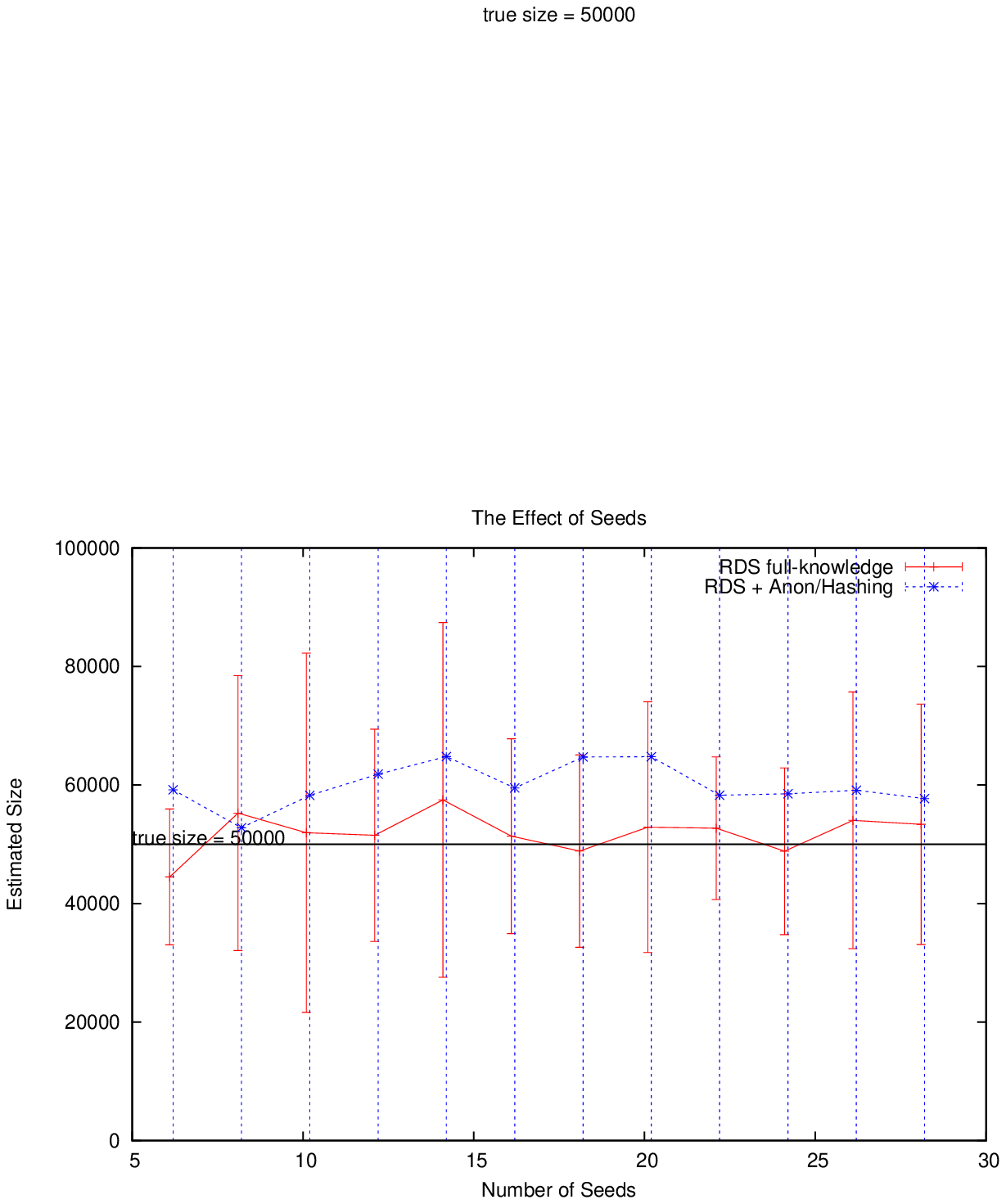}}
\end{tabular}
\caption{The impact of number of initial seeds on population size estimate}
\label{seeds-graphs}
\end{figure}

Figure \ref{coupons-graphs} shows how population estimates change as the {\bf number of coupons} given to each subject is varied from 
1 to 5, inside networked populations
 comprised of (a) 6{,}250, (b) 12{,}500, (c) 25{,}000, and (d) 50{,}000 individuals.  In these experiments,
the size of the RDS capture sample $n_0=500$, the number of seeds $s=6$, 
the number of reports provided by each subject is $p=25$, and the hash space size $|H|=3{,}125$.
Here, as with the number of initial seeds, the method performs well for network sizes at or below 12{,}500, and the number of coupons has little predictable effect on the reliability or accuracy of the estimates at this threshold. Here too hashed results overestimate the actual network size by the same 10-15 percent, but produce tolerable variance levels independent of the number of coupons given (below 5). Above 12{,}500, the unhashed results remain relatively accurate, and are robust against changes in the number of referrals allowed to each participant. The variances in the hashed results for populations above 12{,}500 are very large, and this concern is unaffected by the number of seeds.  

\begin{figure}[p]
\centering
\begin{tabular}{cccc}
\subcaptionbox{Actual population size 6{,}250\label{coupons6}}{\includegraphics[width = 3.2in]{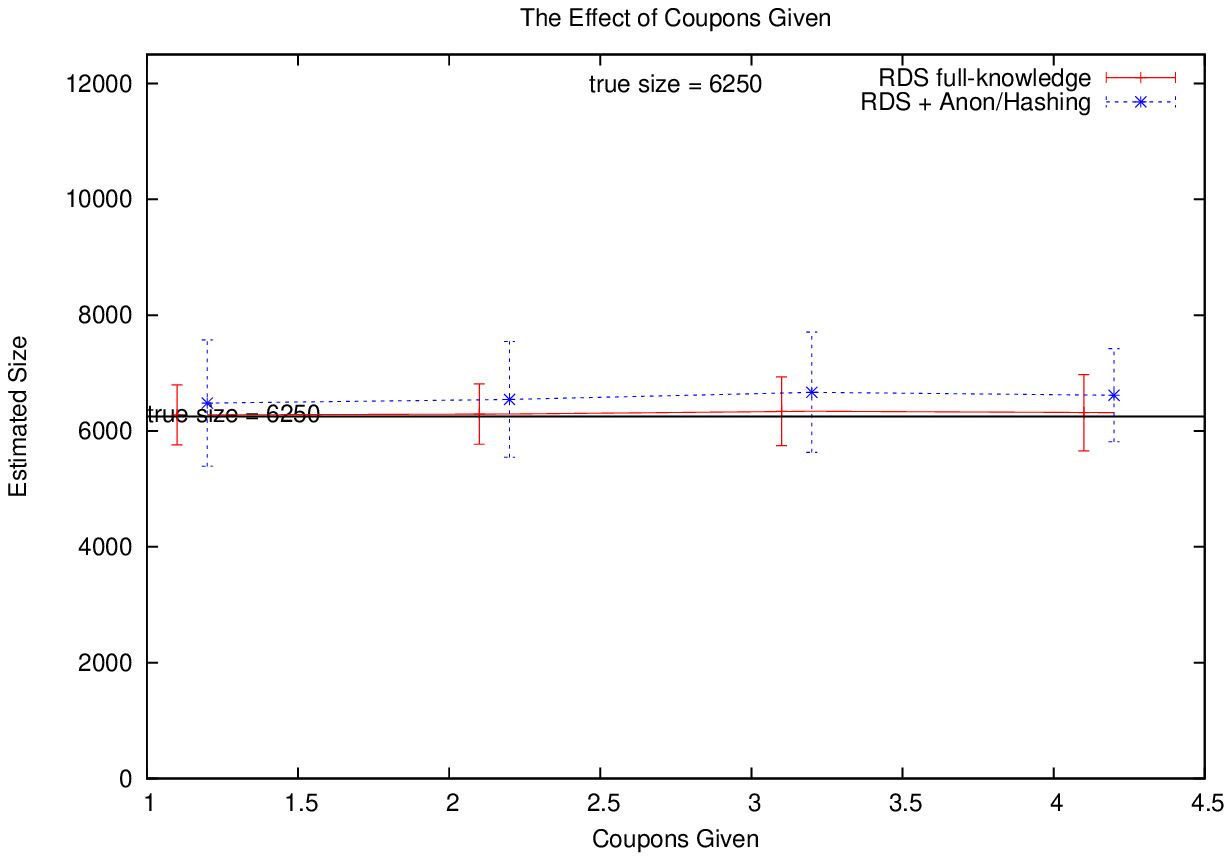}} &
\subcaptionbox{Actual population size 12{,}500\label{coupons12}}{\includegraphics[width = 3.2in]{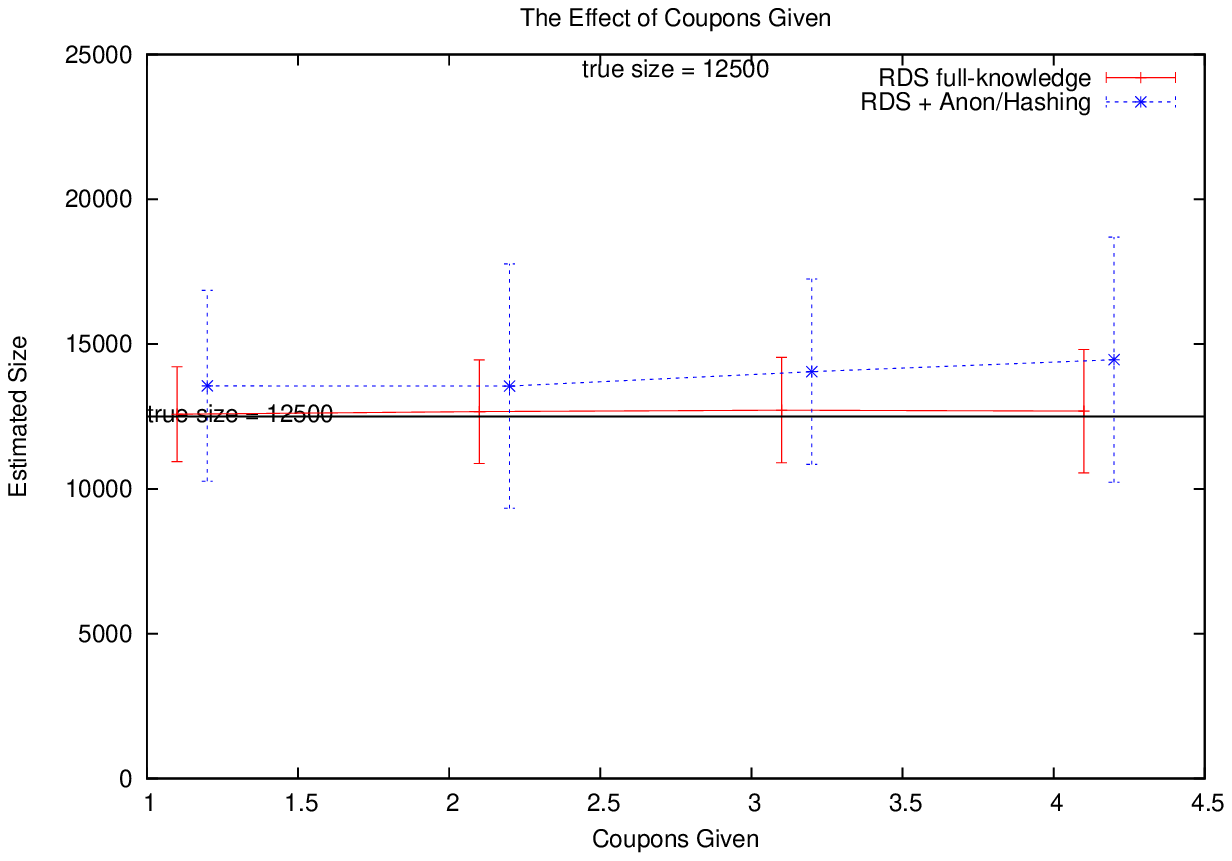}} \\
\subcaptionbox{Actual population size 25{,}000\label{coupons25}}{\includegraphics[width = 3.2in]{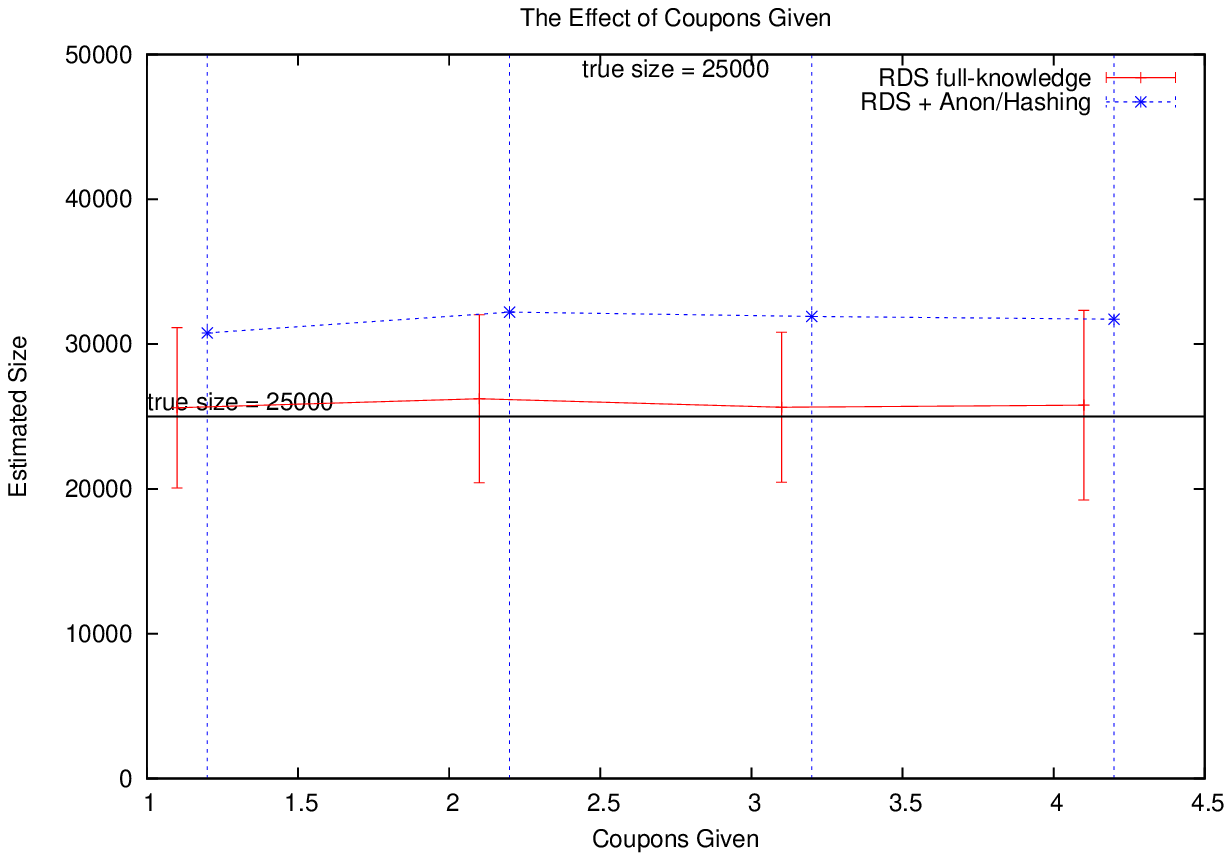}} &
\subcaptionbox{Actual population size 50{,}000\label{coupons50}}{\includegraphics[width = 3.2in]{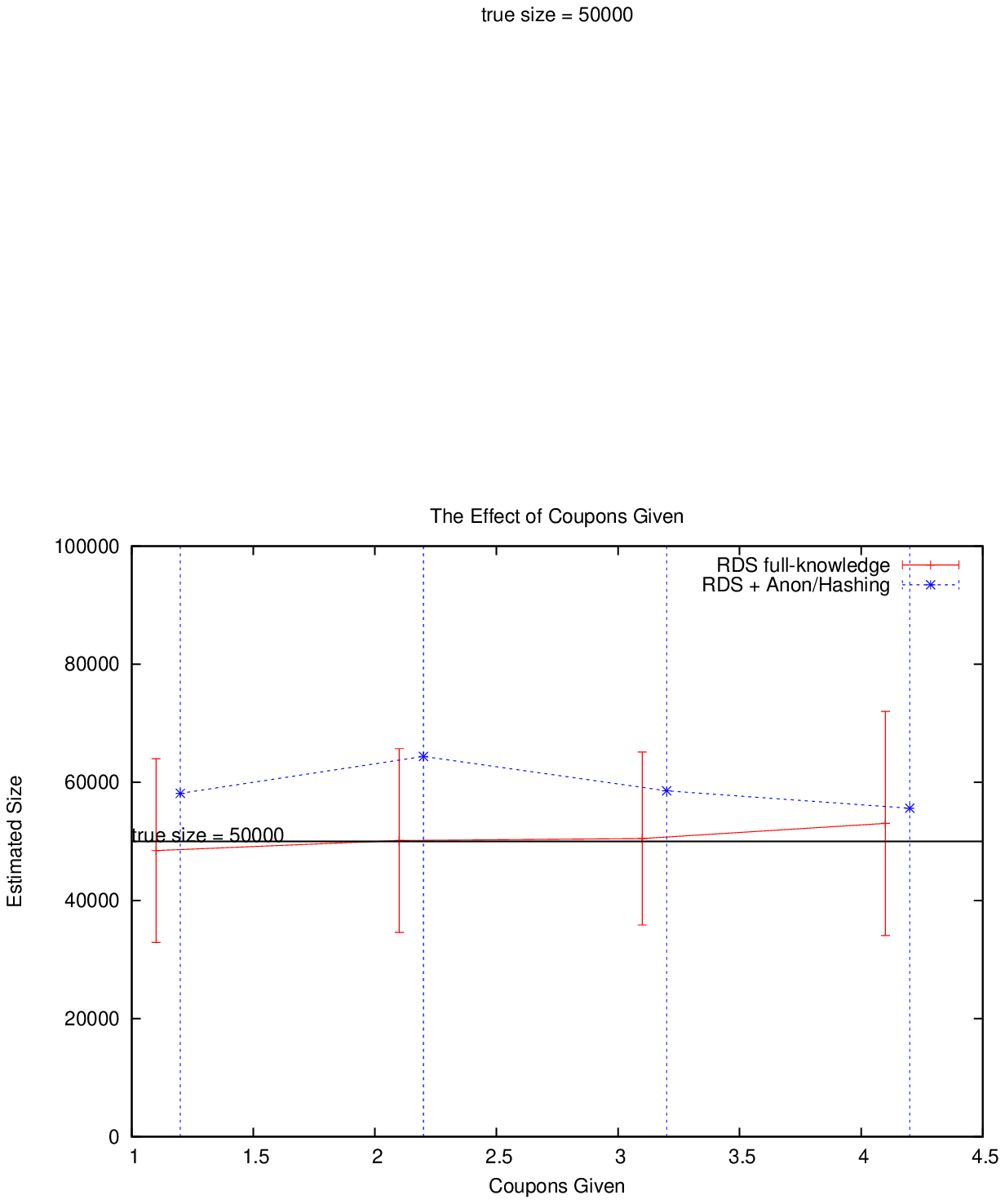}}
\end{tabular}
\caption{The impact of number of coupons per subject on population size estimate}
\label{coupons-graphs}
\end{figure}

Figure \ref{reports-graphs} shows how population estimates change as the {\bf number of reports} requested from each subject is varied from 1 to 30, inside  networked populations comprised of (a) 6{,}250, (b) 12{,}500, (c) 25{,}000, and (d) 50{,}000 individuals.  In these experiments, the size of the RDS capture sample $n_0=500$, the number of seeds $s=6$, the number of coupons given to each subject $c=3$, and the hash space size $|H|=3{,}125$. These results show a similar pattern to those discovered when varying the number of initial seeds or the number of coupon referrals allowed to each participant. For network sizes below 12{,}500, both unhashed and hashed results perform well, with variance growing as network size is increased regardless of the number of reports requested from each respondent. We note that for unhashed results, variance diminishes as the number of reports grows to 5, but that little is gained in reliability or accuracy of the estimates by allowing for more than 5 reports. Above 12{,}500, variance for the hashed results increases considerably, though unhashed results remain more robust. This is true in the extreme at the highest level of population tested (e.g. networks of size 50{,}000). 

\begin{figure}[p]
\centering
\begin{tabular}{cccc}
\subcaptionbox{Actual population size 6{,}250\label{reports6}}{\includegraphics[width = 3.2in]{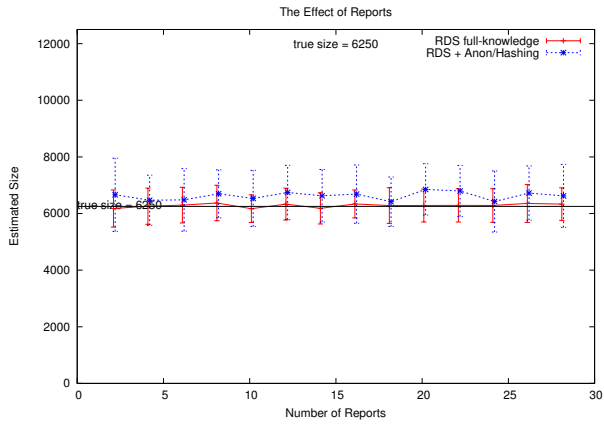}} &
\subcaptionbox{Actual population size 12{,}500\label{reports12}}{\includegraphics[width = 3.2in]{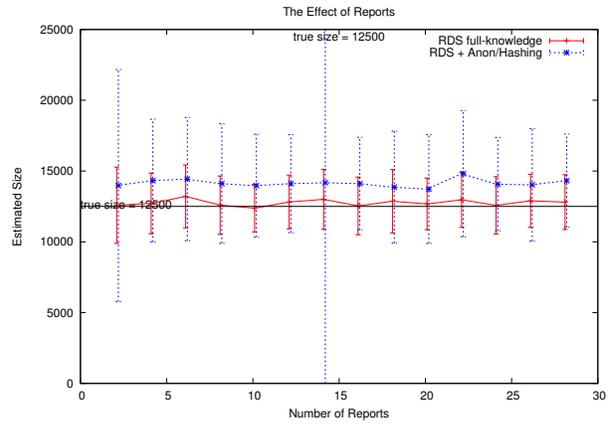}} \\
\subcaptionbox{Actual population size 25{,}000\label{reports25}}{\includegraphics[width = 3.2in]{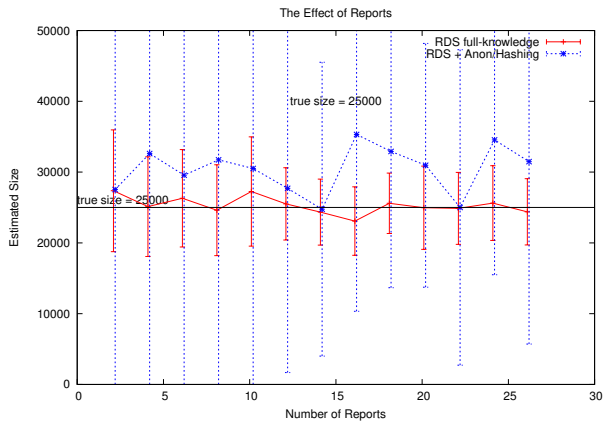}} &
\subcaptionbox{Actual population size 50{,}000\label{reports50}}{\includegraphics[width = 3.2in]{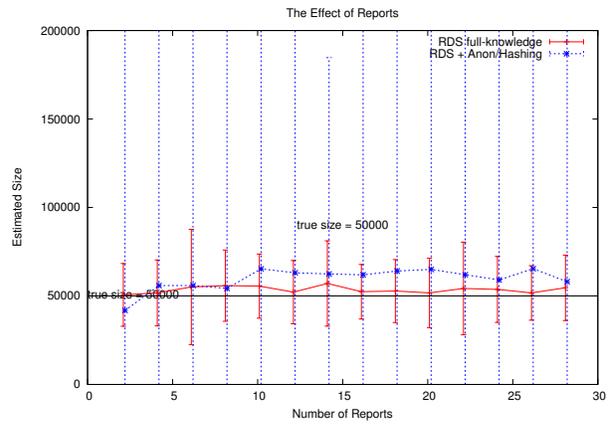}}
\end{tabular}
\caption{The impact of number of reports per subject on population size estimate}
\label{reports-graphs}
\end{figure}

Figure \ref{hash-graphs} shows how population estimates change as the {\bf hash space size} is varied from 100 to 4000, inside  networked populations comprised of (a) 6{,}250, (b) 12{,}500, (c) 25{,}000, and (d) 50{,}000 individuals.  In these experiments, the size of the RDS capture sample $n_0=500$, the number of seeds $s=6$, the number of coupons given to each subject $c=3$, and the number of reports provided by each subject is $p=25$. These results show a similar pattern to those discovered when varying the number of initial seeds or the number of coupon referrals allowed to each participant. Little is gained by increasing the size of the hashing space up to 4{,}000. However, the unhashed results perform well in terms of both accuracy and variance, suggesting that further increases in the overall hashing space ought to help narrow the variance of the hashed results and improve the accuracy of the estimate as hashing space is increased.

\begin{figure}[h]
\centering
\begin{tabular}{cccc}
\subcaptionbox{Actual population size 6{,}250\label{hash6}}{\includegraphics[width = 3.2in]{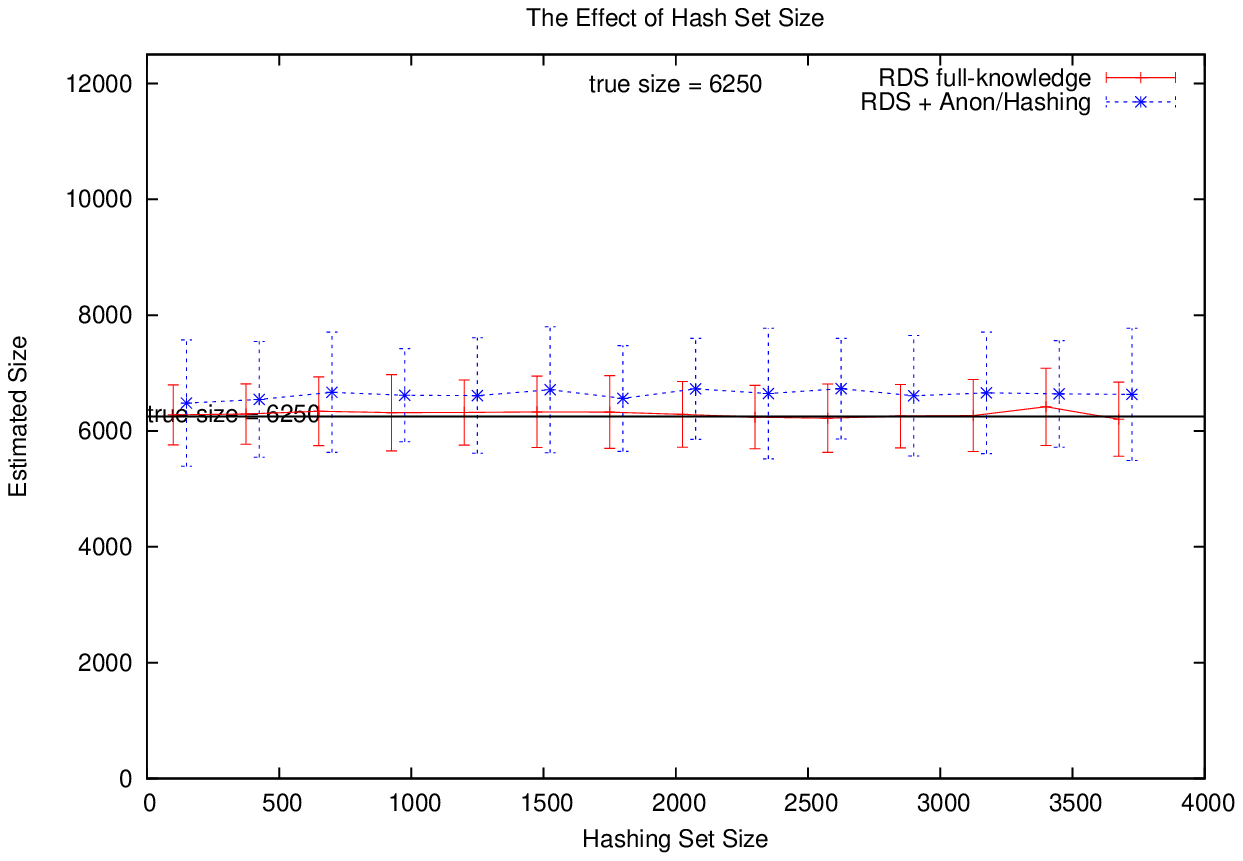}} &
\subcaptionbox{Actual population size 12{,}500\label{hash12}}{\includegraphics[width = 3.2in]{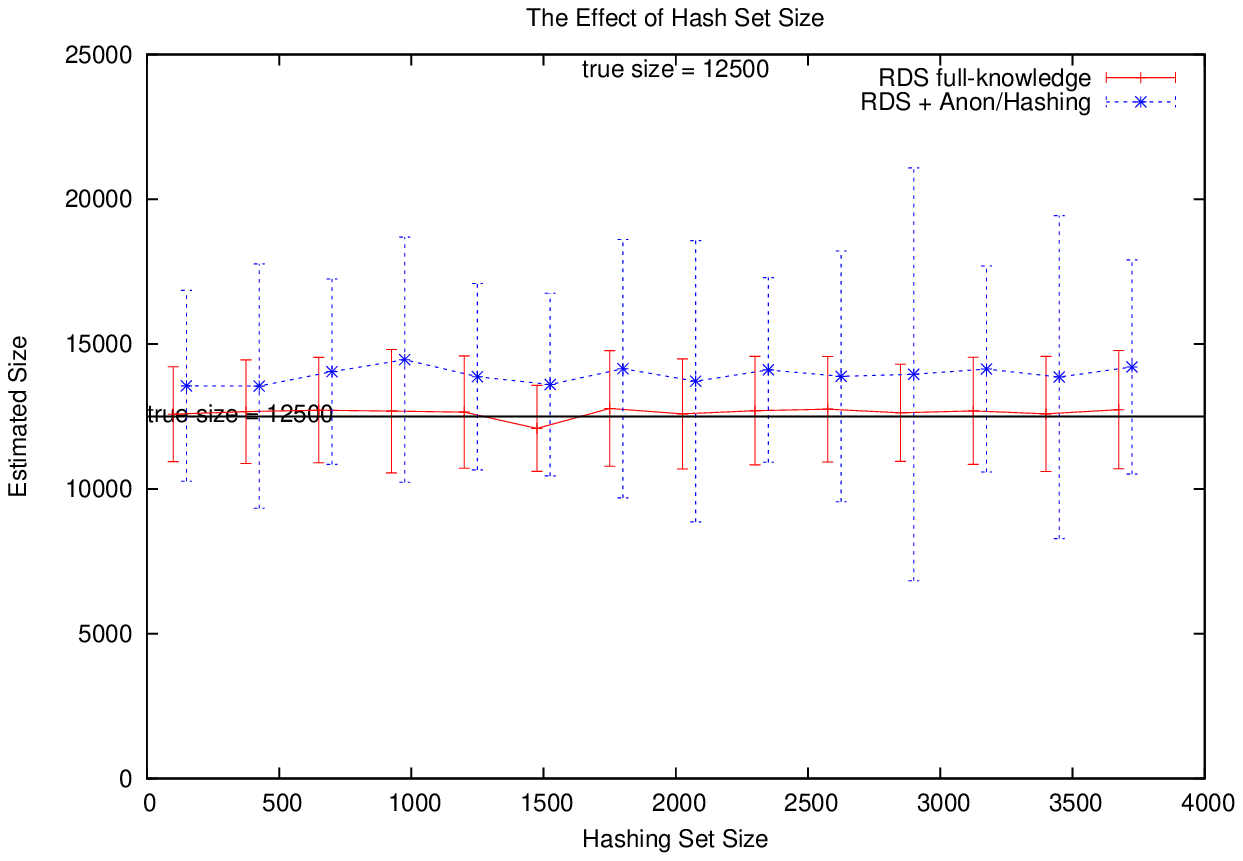}} \\
\subcaptionbox{Actual population size 25{,}000\label{hash25}}{\includegraphics[width = 3.2in]{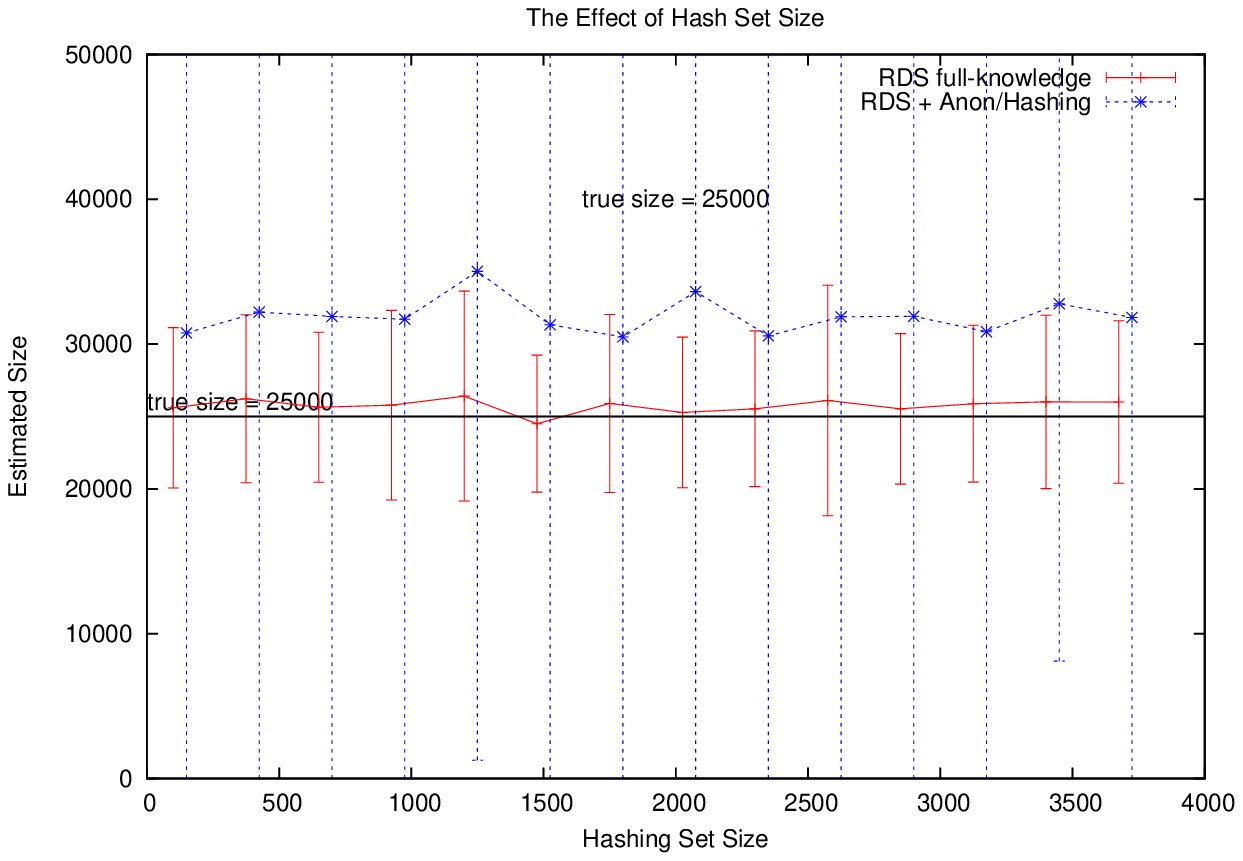}} &
\subcaptionbox{Actual population size 50{,}000\label{hash50}}{\includegraphics[width = 3.2in]{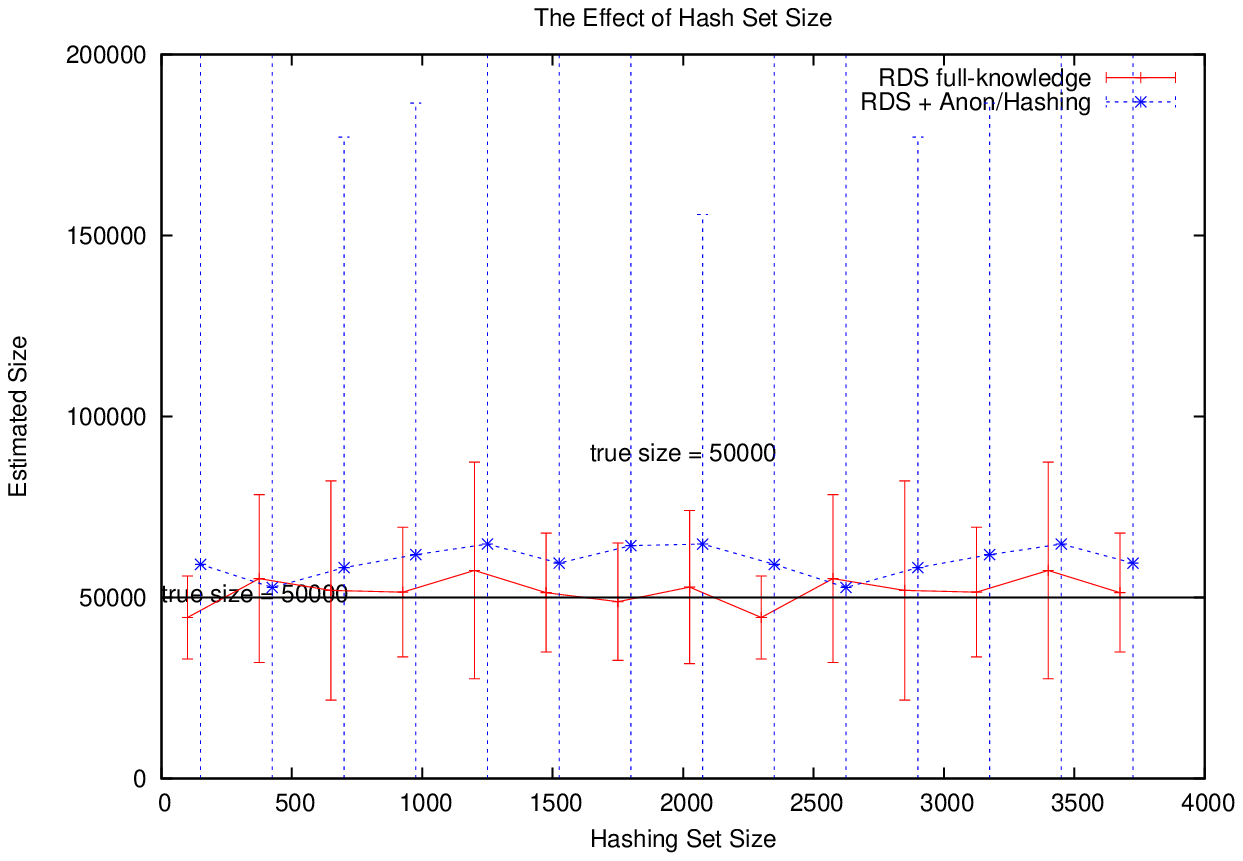}}
\end{tabular}
\caption{The impact of hash set size on population size estimate}
\label{hash-graphs}
\end{figure}

\subsection{Software Platform}

The software was written in Java using the JUNG \cite{o2003jung} library.  The source code for the programs used in this research is available on 
Github, by permission of the authors.

\section{Discussion and Limitations}

In a number of ways, these results are encouraging of the ability of this method to accurately and reliably estimate network size from ordinary RDS samples, despite pointing to the need for additional experiments. Under conditions commonly associated with RDS implementation in research on hidden populations---the standard implementation protocols of 5 to 10 seeds, 3 to 5 coupons, and sample sizes of 500 or more---the method proposed here resulted in relative robust and mainly reliable estimates for hidden populations of size 12{,}500 or smaller, even under conditions of hashing/anonymity. The results shown in Figure \ref{rds-graphs} suggest that increased RDS sample size could potentially raise the limiting population threshold to 25{,}000 or higher without varying the number of seeds or allowed referrals. What's more, the performance of the unhashed estimator in larger populations suggests that a greater hashing space (and thus lower number of expected false matches) could also raise this population threshold even further. We note that a very large hashing space is provided by the increased use of mobile phones among the populations of interest. As described in an earlier paper \cite{TELEFUNKEN2012}, the encoding of phone numbers allows for a very large hashing space that is both one way and anonymous, and which can, to an extent, be assumed to be randomly distributed in the population. The use of a significant number of digits, say seven, in an even/odd, 0-4/5-9 encoding would provide a hashing space of $2^{14}$, which could potentially bring the hashed results for larger networks more in line with the unhashed results seen at these same sizes.

We also show that the method can avoid the potential problem of negative population estimates  without resorting to additional sampling, simply by implementing a bootstrap re-sampling of the existing project data. Using this technique, different fractions of the RDS-based capture set $\uppsi S$ and recapture multiset $\uppsi rS$ are obtained without conducting additional data collection. In effect, this procedure can be thought of as multiple independent trials carried out throughout the process. With each boostrap pair of $(S,rS)$, we have, in effect, a distinct population estimate. This allows us to use the mulitple trees common in RDS research to minimize the influence of negative estimations arriving from one or another of these trees, or unusual interaction of reports and recruitment between trees. Not shown here is the fact that large standard deviations in the bootstrapping results could potentially serve as a red flag for both the overall population size estimate and the RDS sample as a whole. 

Several limitations on the above work must be noted, however. Primarily, this paper examines only one family of graphs that are assumed to connect the population of interest (with the larger ambient population). While there is reason to believe that fat-tailed degree distributions are prevalent in hidden populations, further experimentation is needed to ascertain the performance of the estimation procedure described here for other families of graphs. It is also notable that the use of 100 runs at each experiment setting allowed us to calculate the standard deviation and overall variance associated with a particular estimate, but such measures are of limited value when most implementations are for a single data episode.  Under single sample conditions, it is important to acknowledge that choosing a single run from the many trials actually invokes greater stochasticity than seen in the standard deviation bars in the Figures presented above.

And finally, we did not examine the effects of clustering in any of the experiments \cite{CLUSTERING1971}. Clustering has been as a crucial consideration in drawing RDS samples, and is likely to have an effect on the estimation results shown here. 

\section{Conclusions and  Future Work}

This paper extends and formalizes previous work on a one-step, network-based population estimation procedure that can be employed under conditions of anonymity \cite{TELEFUNKEN2012}. It employs a modified capture-recapture methodology for estimating the size of a hidden population from an RDS sample, allowing it to be employed in situations where sub-populations of interest to health officials remain hidden within a larger ambient populations. Where successful, it provides reasonable population estimates that can be used in conjunction with RDS-produced prevalence estimates to give public health workers information necessary for health promotion. The easy integration of the method with current data collection strategies for key populations and other hidden groups allows for greater cost-effectiveness for size estimation procedures with minimal additional participant burden. Prior work by our own team \cite{curtis_commercial_2008, wendel_dynamics_2011} and others \cite{sulaberidze_population_2016} has shown that this technique can be employed under ordinary fieldwork conditions common to hidden populations, and the simulation experiments here show that reasonably accurate population estimates can be derived from relatively small sample sizes (between 2 and 8 percent of the hidden population total) under conditions of subject anonymity. Population size estimation for hidden and hard-to-reach populations is of continuing importance to health officials because health problems are often found to be concentrated in such populations, while their “hiddenness” necessarily frustrates health outreach efforts \cite{magnani_review_2005, merli_sampling_2016}. As such, this technique should be of interest to health researchers and policy makers working with such hard to reach groups.

Given both the results and the limitations discussed above, three avenues of future work will be pursued. In particular, future experiments will explore 1) other random graph families (e.g. Erd\H{o}s-R\'{e}nyi, Exponential Random Graphs and others normally associated with human social networks); 2) the effect of clustering within these graph families as it affects both hashed and unhashed estimates; and 3) the impact of greater hashing space that may allow for better results (i.e. lower variance) as network size increases. These steps will help clarify the accurate use of these techniques, and perhaps extend their use to larger population sizes. 

\textbf{Acknowledgments:} This work was supported by the NIH National Institute on Drug Abuse of the National Institutes of Health (grant number R01DA037117) and the NIH National Institute for General Medical Sciences (grant number R01GM118427). The content is solely the responsibility of the authors and does not necessarily represent the official views of the National Institutes of Health. Special thanks to Travis Wendel, Ric Curtis, and Patrick Habecker for their work on earlier versions of this methodology.

\clearpage
\medskip
 
\bibliographystyle{unsrt}
\bibliography{paper}

\end{document}